\documentclass[journal]{IEEEtran}

\usepackage{amssymb}
\usepackage{amsmath}
\usepackage{cite}
\usepackage{url}
\usepackage{empheq}
\usepackage{xcolor}
\usepackage{graphicx}
\usepackage{subfigure}
\usepackage{enumitem}
\usepackage{fancyhdr}
\usepackage{mdwmath}	
\usepackage{mdwtab}
\usepackage{caption}
\usepackage{amsthm}
\usepackage{algorithm}
\usepackage{algorithmic}

\newtheorem{lemma}{Lemma}
\newtheorem{remark}{Remark}
\newtheorem{theorem}{Theorem}
\newtheorem{corollary}{Corollary}
\newtheorem{assumption}{Assumption}

\newtheorem{proposition}{Proposition}

\newcommand{\eqr}[1]{(\ref{#1})}
\newcommand{\fref}[1]{Fig.~\ref{#1}}

\newcommand*{\QEDA}{\null\nobreak\hfill\ensuremath{\blacksquare}}

\begin{document}
\title{Leaky-Coaxial Pinching-Antenna System with Adjustable Slot Apertures}
\author{Kaidi~Wang,~\IEEEmembership{Member,~IEEE,}
Daniel~K.~C.~So,~\IEEEmembership{Senior~Member,~IEEE,}
Zhiguo~Ding,~\IEEEmembership{Fellow,~IEEE,}
and~George~K.~Karagiannidis,~\IEEEmembership{Fellow, IEEE}
\thanks{Kaidi~Wang and Daniel~K.~C.~So are with the Department of Electrical and Electronic Engineering, the University of Manchester, Manchester, M1 9BB, UK (email: kaidi.wang@ieee.org; d.so@manchester.ac.uk).}
\thanks{Zhiguo~Ding is with the School of Electrical and Electronic Engineering (EEE), Nanyang Technological University, Singapore 639798 (e-mail: zhiguo.ding@ntu.edu.sg).}
\thanks{George~K.~Karagiannidis is with Department of Electrical and Computer Engineering, Aristotle University of Thessaloniki, Greece (email: geokarag@auth.gr).}}
\maketitle
\begin{abstract}
As a practical physical implementation of pinching-antenna systems, leaky coaxial cable (LCX) enables distributed radiation in more general wireless environments, particularly for lower-frequency applications. In this paper, a leaky-coaxial pinching-antenna system, referred to as the LCX pinching-antenna system, is investigated, and adjustable slot apertures are introduced, such that the slot size can be continuously adjusted rather than being restricted to binary activation. Specifically, the aperture adjustment is modeled as amplitude scaling of the channels induced by the corresponding slots, or equivalently, as power coefficients associated with different slots. Accordingly, analytical results are derived to quantify the performance gain of continuous aperture adjustment over binary slot activation and to reveal the impact of channel coherence on the achievable data rate improvement. Furthermore, static and dynamic time-division multiple access (TDMA) schemes are considered, and the corresponding sum rate maximization problems are formulated and efficiently solved by quadratic transform based optimization, combined with successive convex approximation and alternating updates. Simulation results demonstrate that the proposed design can significantly outperform conventional fixed-antenna systems, traditional LCX schemes, and binary slot activation in terms of both achievable sum rate and outage probability.
\end{abstract}
\begin{IEEEkeywords}
Generalized pinching-antenna systems, leaky coaxial cable (LCX), reconfigurable antennas, aperture adjustment, slot activation, resource allocation.
\end{IEEEkeywords}
\section{Introduction}
With the rapid advancement of wireless communications, the number of connected devices has increased dramatically, leading to increasingly complex propagation environments. In such scenarios, conventional fixed-antenna deployments often encounter difficulties in providing reliable coverage and maintaining favorable channel conditions. To address these challenges, various flexible antenna technologies have been proposed to enhance the adaptability of wireless systems, such as reconfigurable intelligent surfaces (RIS), fluid antennas, and movable antennas \cite{wu2019irs, wong2020fluid, zhu2023modeling}. These technologies aim to improve wireless communication performance through environmental reconfiguration or antenna flexibility, thereby enabling more controllable signal propagation and enhanced channel conditions \cite{wu2021intelligent, wu2024fluid, ma2024movable}. As one of the most recent members of this family of flexible antenna technologies, pinching-antenna systems have attracted considerable attention due to their capability of directly adjusting channel amplitudes, adaptively establishing line-of-sight (LoS) communication links, and achieving low-cost implementation \cite{suzuki2022pinching, ding2024pin}. These features make pinching antennas a promising paradigm for future wireless networks \cite{liu2025pinching, yang2025pinching}.

A typical pinching-antenna system consists of a guiding medium, such as a waveguide, along which the transmitted signal propagates, and a set of radiating elements, such as pinches, that radiate the guided signal into free space \cite{ding2025edma, tegos2025pin}. By selectively deploying pinches at designated locations along the waveguide, the guided signal can be flexibly leaked into the wireless environment, thereby enabling controllable distributed transmission \cite{kaidi2025pin4, xie2026pinching}. Depending on the implementation of the pinching points, pinching-antenna systems can be broadly classified into continuous and discrete architectures, while hybrid architectures combining these two approaches have also been investigated. In the continuous pinching-antenna architecture, the pinches can move continuously along the guiding structure, such that their radiation positions can be adjusted according to the communication requirements \cite{xu2025pin}. By contrast, the discrete pinching-antenna architecture adopts a practical implementation in which multiple pinches are either pre-installed at fixed locations along the waveguide or moved among predefined candidate locations as needed \cite{kaidi2025pin}. These different implementations provide different levels of flexibility and hardware complexity for realizing pinching-antenna systems in wireless networks.
\subsection{State-of-the-Art}
Building on the above classification, existing studies have investigated continuous and discrete pinching-antenna architectures from different perspectives. For continuous pinching-antenna architectures, early works focused on analytical modeling and performance characterization. In particular, the optimal antenna placement and its impact on system performance were analyzed to provide fundamental design insights \cite{ding2025analytical}. Building on this foundation, subsequent studies considered system-level designs by jointly optimizing antenna positions and transmit beamforming to improve communication performance \cite{zhou2026pin}. This framework was further extended to non-orthogonal multiple access (NOMA) scenarios, where antenna positions and power allocation were jointly optimized to enhance spectral efficiency \cite{xu2025pin2}. For discrete pinching-antenna architectures, discrete multi-waveguide pinching-antenna systems were explored, where each waveguide serves its associated users through joint waveguide assignment, antenna activation, and resource allocation \cite{kaidi2025pin2}. Under a different transmission structure, discrete pinching-antenna systems were also investigated in which all waveguides cooperatively serve all users through joint beamforming and antenna activation design \cite{xu2026pass}. Furthermore, the performance of discrete architectures was analytically characterized, revealing the impact of spatial discretization and quantifying the gap relative to ideal continuous placement \cite{tyrovolas2026pin}. In addition, several recent works have jointly examined continuous and discrete pinching-antenna architectures. In \cite{chen2025pin}, antenna placement and beamforming were optimized for multicast systems under both continuous positioning and discrete activation, illustrating the tradeoff between performance and implementation complexity. In \cite{wang2025pa}, a common signal model was adopted for both continuous and discrete activations, and the corresponding beamforming designs were analyzed under the two architectures. Despite the high flexibility provided by continuous architectures, their realization generally requires rapid and precise mechanical repositioning, which poses significant challenges for practical implementation. By contrast, discrete architectures are more suitable for practical deployment, where the design of activation mechanisms plays a central role in shaping both system realization and performance.
\subsection{Motivation and Contributions}
From an implementation perspective, most existing studies on pinching-antenna systems have focused on waveguide based realizations, where signals propagate along a guiding structure and are radiated through pinching elements. Such architectures are typically designed for high-frequency scenarios and rely on specialized waveguide structures, which may limit their applicability in more general wireless communication environments. In this context, leaky coaxial cables (LCXs) have recently been recognized as an alternative guiding medium for realizing pinching-antenna systems \cite{xu2025generalized}. In particular, by incorporating slot activation, the spatially distributed slots on LCX can be regarded as pre-installed pinching elements, thereby naturally corresponding to the discrete pinching-antenna architecture. Preliminary studies have explored this idea by establishing LCX based pinching-antenna models, where new channel characteristics such as angle dependent radiation patterns are introduced \cite{kaidi2025generalized}, and more flexible transmission designs such as dual-end feeding are enabled \cite{kaidi2026generalized}. Building upon these prior efforts, this paper further develops the LCX pinching-antenna system by introducing adjustable slot apertures, whereby conventional binary slot activation is generalized from an on/off scheme to continuous aperture adjustment. This allows each slot to be associated with a continuous weighting factor, which can be interpreted as a power coefficient, and hence enables more flexible slot-level radiation control.

The main contributions are summarized as follows:
\begin{itemize}[leftmargin=*]
\item An LCX pinching-antenna system with adjustable slot apertures is proposed, where conventional slot activation is generalized to aperture adjustment, resulting in a continuous weighting factor on the amplitude of each slot induced channel. Analytical results are derived to characterize the performance advantage of continuous aperture adjustment over binary slot activation and to reveal the effect of channel coherence on the achievable rate gain.
\item Based on the proposed framework, two time-division multiple access (TDMA) transmission schemes are developed to achieve a tradeoff between implementation complexity and transmission flexibility, namely static and dynamic aperture adjustment. The former employs a common aperture configuration for all users, whereas the latter adapts user-specific aperture coefficients across TDMA slots. For both schemes, the corresponding sum rate maximization problems are formulated.
\item Different optimization methods are developed to solve the resulting problems. For the static aperture adjustment problem, an iterative algorithm is constructed based on successive convex approximation (SCA) and quadratic transform. For the dynamic aperture adjustment problem, the original formulation is decomposed into user-specific subproblems, for which a low-complexity alternating optimization method with closed-form aperture updates is derived.
\item Simulation results demonstrate that the proposed LCX pinching-antenna system with aperture adjustment significantly outperforms both fixed-antenna and conventional LCX systems. Compared with binary slot activation, continuous aperture adjustment further enhances the sum rate and lowers the outage probability. The results also verify the analytical insights on the advantage of continuous aperture adjustment and the effect of channel coherence on the achievable gain.
\end{itemize}
\section{System Model}
\begin{figure}[!t]
\centering{\includegraphics[width=85mm]{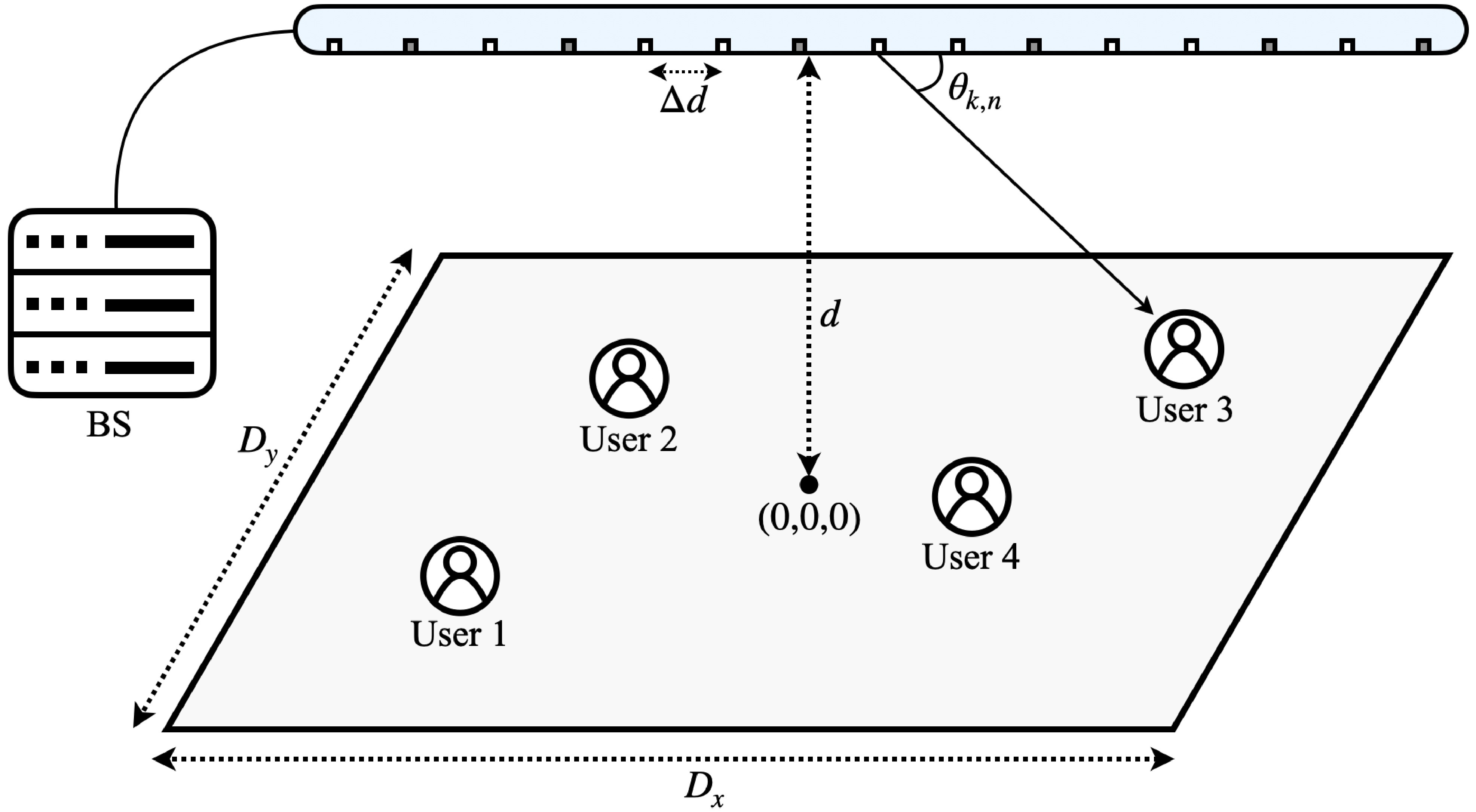}}
\caption{Illustration of the considered downlink LCX Pinching-Antenna System.}\vspace{-4mm}
\label{system}
\end{figure}

Consider a downlink LCX pinching-antenna system deployed within a rectangular service region of size $D_x \times D_y$, as shown in \fref{system}. The base station (BS) is connected to a single coaxial cable equipped with $K$ vertically oriented slots to serve $N$ users. The cable is installed at a height of $d$, and the spacing between two adjacent slots is denoted by $\Delta d$. The sets of users and slots are given by $\mathcal{N}=\{1,2,\dots,N\}$ and $\mathcal{K}=\{1,2,\dots,K\}$, respectively.
\subsection{Channel Model}
In the considered LCX pinching-antenna system, the signal injected at the feed port propagates along the coaxial cable and reaches each slot. Accounting for the amplitude attenuation and phase progression evaluated at the slot center \cite{torrance1996lcx}, the channel from the feed point located at $(-D_x/2,0,d)$ to the $k$-th slot can be expressed as follows:
\begin{equation}
h_k=10^{-\frac{\kappa_r}{20}(k-1)\Delta d}e^{-j\frac{2\pi}{\lambda}\sqrt{\varepsilon_r}(k-1)\Delta d},
\end{equation}
where $\kappa_r$ is the attenuation constant, $\lambda$ is the free-space wavelength, and $\varepsilon_r$ is the relative permittivity of the cable.

At each slot, the guided signal can leak into free space through the slot aperture. As the slots are oriented downward, the radiated field is modeled by a conical radiation pattern that depends on both the propagation distance and the elevation angle \cite{morgan1999lcx, yin2024lcx}. For the direct link from slot $k$ to user $n$, the LoS component is given by
\begin{equation}
h_{k,n}^\mathrm{LoS}=\eta\frac{e^{-j\frac{2\pi}{\lambda}\left\|\boldsymbol{\psi}_n-\boldsymbol{\psi}_k^\mathrm{slot}\right\|}}{\left\|\boldsymbol{\psi}_n-\boldsymbol{\psi}_k^\mathrm{slot}\right\|}\sin\!\left(\theta_{k,n}\right),
\end{equation}
where $\eta=\frac{\lambda}{4\pi}$ is the free-space path-loss normalization factor, $\boldsymbol{\psi}_n=(x_n,y_n,0)$ is the position of user $n$, and $\boldsymbol{\psi}_k^\mathrm{slot}=(-D_x/2+(k-1)\Delta d,0,d)$ is the location of slot $k$. Moreover, $\|\boldsymbol{\psi}_n-\boldsymbol{\psi}_k^\mathrm{slot}\|$ is the distance between user $n$ and slot $k$, and $\theta_{k,n}$ is the corresponding elevation angle, satisfying $\sin(\theta_{k,n})=\frac{d}{\|\boldsymbol{\psi}_n-\boldsymbol{\psi}_k^\mathrm{slot}\|}$, as illustrated in \fref{system}.

To model the non-line-of-sight (NLoS) component, $L$ scatterers are introduced to characterize the reflected propagation paths, which play an important role in general wireless environments, especially in lower-frequency scenarios. Specifically, the radiated signal from a slot first propagates to the scatterers and is then reflected toward the user. The resulting NLoS channel between slot $k$ and user $n$ is given by
\begin{equation}
h_{k,n}^\mathrm{NLoS}\!=\!\eta\!\sum_{\ell=1}^{L}\!\frac{\delta_\ell e^{-j\frac{2\pi}{\lambda}\left(\left\|\boldsymbol{\psi}_\ell^\mathrm{scat}\!-\boldsymbol{\psi}_k^\mathrm{slot}\right\|+\left\|\boldsymbol{\psi}_n\!-\boldsymbol{\psi}_\ell^\mathrm{scat}\right\|\right)}}{\left\|\boldsymbol{\psi}_\ell^\mathrm{scat}-\boldsymbol{\psi}_k^\mathrm{slot}\right\|\left\|\boldsymbol{\psi}_n-\boldsymbol{\psi}_\ell^\mathrm{scat}\right\|}\!\sin\!\left(\theta_{k,\ell}\right),
\end{equation}
where $\delta_\ell$ is the complex gain associated with the $\ell$-th scattering path, $\boldsymbol{\psi}_\ell^\mathrm{scat}=(x_\ell,y_\ell,z_\ell)$ is the location of scatterer $\ell$, and $\theta_{k,\ell}$ is the elevation angle from slot $k$ to scatterer $\ell$.

By combining the guided wave propagation along the cable with the wireless propagation from slots to users, the channel from the feed point to user $n$ via slot $k$ can be expressed as
\begin{equation}
h_{k,n}=h_k\!\left(h_{k,n}^\mathrm{LoS}+h_{k,n}^\mathrm{NLoS}\right).
\end{equation}
\subsection{Slot Aperture Adjustment Design}
\begin{figure}[!t]
\centering{\includegraphics[width=64mm]{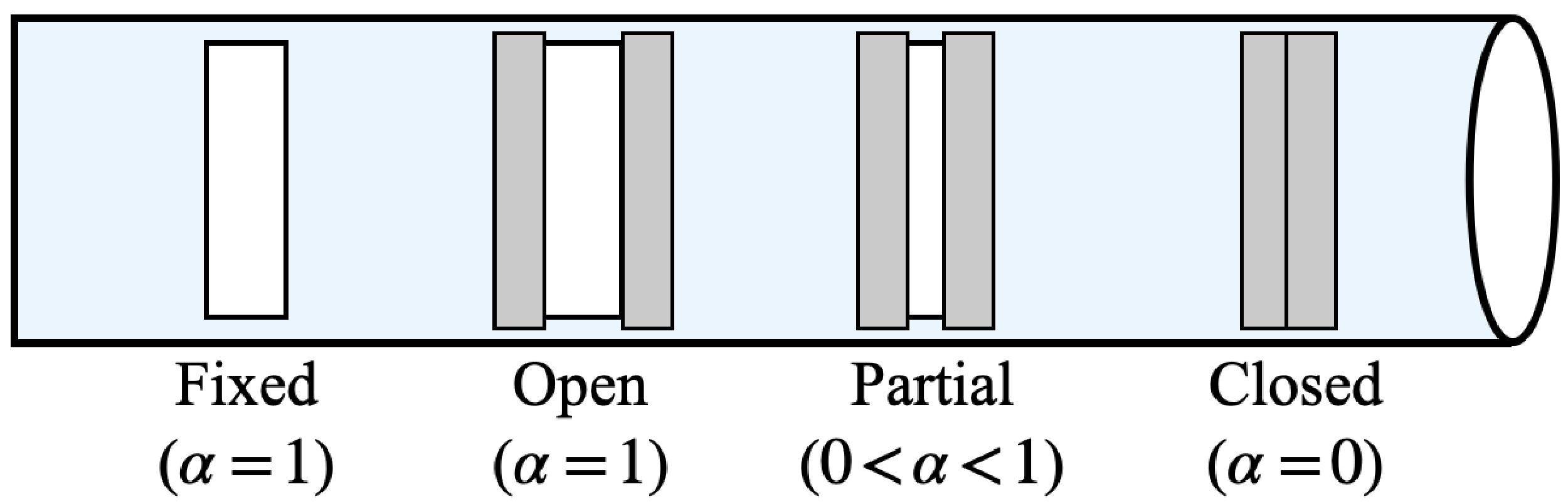}}
\caption{An illustration of the symmetric sliding shutter based slot aperture adjustment mechanism.}\vspace{-4mm}
\label{lcxaa}
\end{figure}

In existing LCX related works, slot-level radiation control has been investigated from a design and modeling perspective. Early LCX designs have explored structural reconfiguration to enable adjustable coupling loss and radiation characteristics, such as through periodic metallic patch loading along the cable \cite{wang2001lcx1}. In more recent LCX related studies, the radiation state of each slot has been modeled via a binary slot activation variable, where a slot is either radiating (on) or blocked (off) \cite{nagayama2022lcx1, kaidi2025generalized}. From an implementation perspective, various practical hardware realizations supporting such on/off operation have been disclosed in a related patent \cite{asplund2016leaky}. Building on the existing studies of binary slot activation, this paper further proposes a slot aperture adjustment mechanism, in which the effective radiating aperture of each slot can be tuned to control the leakage strength. Prior studies have shown that the slot size can significantly affect the coupling loss and the radiated power \cite{wang2001lcx, siddiqui2020lcx}. In particular, enlarging the slot opening reduces the coupling loss while enhancing the radiated power into free space. As shown in \fref{lcxaa}, this mechanism can be realized by a symmetric two-blade shutter that closes toward the aperture center, thereby maintaining a fixed geometric radiation center throughout the adjustment process\footnote{The aperture adjustment mechanism is introduced as a modeling oriented hardware abstraction rather than a hardware prototype. Practical implementation may rely on reconfigurable mechanical or electromechanical structures, with reduced complexity designs adopting grouped control over multiple slots instead of fully independent per-slot actuation.}. As a result, the shutter operation modulates the radiation amplitude without altering the radiation direction or introducing additional phase shifts\footnote{Alternatively, an asymmetric single-blade shutter can be adopted, such that partial aperture blocking leads to a displacement of the slot's effective phase center. In this case, when $0<\alpha_k<1$, the effective in-cable propagation distance and the elevation angle toward the user are both altered, resulting in a controllable phase shift in addition to amplitude scaling.}.

To model the shutter controlled radiation characteristics, a slot aperture adjustment coefficient $\alpha_k \in [0,1]$ is introduced, where $\alpha_k=0$ corresponds to a fully blocked slot and $\alpha_k=1$ corresponds to a fully open aperture, as shown in \fref{lcxaa}. In the considered model, the aperture adjustment coefficient characterizes the effect of slot opening on the leakage strength, and hence scales the leaked wave amplitude and the resulting channel magnitude. The radiation pattern variation induced by aperture adjustment is assumed negligible relative to the dominant amplitude modulation effect. Accordingly, the effective channel of user $n$ is given by
\begin{equation}
h_n=\sum_{k=1}^K \alpha_k h_{k,n}=\mathbf{h}_n^\mathsf{T}\boldsymbol{\alpha},
\end{equation}
where $\mathbf{h}_n=[h_{1,n}, \dots, h_{K,n}]^\mathsf{T}$ and $\boldsymbol{\alpha}=[\alpha_1, \dots, \alpha_K]^\mathsf{T}$ are the slot induced channel vector and the aperture adjustment vector, respectively.

The following remarks clarify the interpretation and generality of the proposed model.
\begin{remark}
The aperture adjustment coefficient $\alpha_k$ jointly represents slot availability and effective aperture control, where $\alpha_k=0$ corresponds to a blocked slot and $\alpha_k\in(0,1]$ scales the radiated field intensity through fractional aperture opening, thereby serving as an amplitude weighting factor applied to the slot induced channel. This unified model captures both binary activation and continuous radiation control within a compact channel formulation.
\end{remark}

\begin{remark}
The continuous aperture adjustment model generalizes the conventional binary slot activation scheme by allowing fractional aperture control. While the binary model restricts each slot to be either fully blocked or fully open, the continuous formulation enables fine-grained amplitude control, thereby enlarging the design space and providing additional flexibility for performance optimization.
\end{remark}

To characterize the advantage of the continuous aperture adjustment model, the following proposition is provided.
\begin{proposition}
Define the feasible sets of binary slot activation and continuous aperture adjustment as
\begin{equation}
\mathcal{A}_\mathrm{bin}\triangleq\{\boldsymbol{\alpha}\in\{0,1\}^K\},
\end{equation}
and
\begin{equation}
\mathcal{A}_\mathrm{con}\triangleq\{\boldsymbol{\alpha}\in[0,1]^K\}.
\end{equation}
Then $\mathcal{A}_\mathrm{bin}\subseteq\mathcal{A}_\mathrm{con}$. For any objective function $f(\boldsymbol{\alpha})$, it holds that
\begin{equation}
f_\mathrm{con}^\star=\max_{\boldsymbol{\alpha}\in\mathcal{A}_\mathrm{con}}f(\boldsymbol{\alpha})\ge\max_{\boldsymbol{\alpha}\in\mathcal{A}_\mathrm{bin}}f(\boldsymbol{\alpha})=f_\mathrm{bin}^\star.
\end{equation}
The same conclusion holds for minimization problems, with the inequality direction reversed.
\end{proposition}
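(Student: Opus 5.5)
The plan is to reduce the statement to monotonicity of the supremum with respect to set inclusion. This works for any objective $f$, because nothing about the structure of $f$ is used beyond it being defined on $\mathcal{A}_\mathrm{con}$.

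The first step is the inclusion $\mathcal{A}_\mathrm{bin}\subseteq\mathcal{A}_\mathrm{con}$. I will check it elementwise. Take any $\boldsymbol{\alpha}\in\{0,1\}^K$. Each entry satisfies $\alpha_k\in\{0,1\}\subset[0,1]$, so $\boldsymbol{\alpha}\in[0,1]^K$.

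The second step is the inequality. Let $\boldsymbol{\alpha}_\mathrm{bin}^\star$ be a maximizer of $f$ over $\mathcal{A}_\mathrm{bin}$. Such a maximizer exists because $\mathcal{A}_\mathrm{bin}$ is a finite set with $2^K$ elements, so the maximum is attained. By the first step, $\boldsymbol{\alpha}_\mathrm{bin}^\star\in\mathcal{A}_\mathrm{con}$, so it is a feasible point of the continuous problem. Hence
\begin{equation*}
f_\mathrm{con}^\star\ge f(\boldsymbol{\alpha}_\mathrm{bin}^\star)=f_\mathrm{bin}^\star .
\end{equation*}

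The only real obstacle is existence of the maximum over $\mathcal{A}_\mathrm{con}$, since $f$ is arbitrary. I will handle it in two ways:
\begin{itemize}
\item If $f$ is continuous, as is the case for the achievable rates $\log_2(1+|\mathbf{h}_n^\mathsf{T}\boldsymbol{\alpha}|^2/\sigma^2)$ and their sums considered later, then $[0,1]^K$ is compact and the maximum is attained by the Weierstrass theorem.
\item Otherwise, I will read $\max$ as $\sup$. The argument is unchanged, since the supremum is an upper bound on $f(\boldsymbol{\alpha}_\mathrm{bin}^\star)$.
\end{itemize}

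For minimization, I will apply the same argument to $-f$. Using $\min g=-\max(-g)$, this gives $\min_{\mathcal{A}_\mathrm{con}}f\le\min_{\mathcal{A}_\mathrm{bin}}f$.

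I will also note that the same reasoning carries over to any additional constraints, such as a per-user or sum-power requirement. The feasible set then becomes the intersection with a common constraint set, and intersecting both sets with the same constraint set preserves the inclusion.
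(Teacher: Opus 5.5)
Your argument is correct and is the same as the paper's: because $\{0,1\}^K\subseteq[0,1]^K$, every binary maximizer is feasible for the continuous problem, so the maximum over the superset cannot be smaller. Your extra care about attainment and your reduction of the minimization case to $-f$ only make explicit what the paper leaves implicit.

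One small slip in a side remark: the paper's rate uses the normalized SNR $|\mathbf{h}_n^\mathsf{T}\boldsymbol{\alpha}|^2/\|\boldsymbol{\alpha}\|_2^2$, which is undefined at $\boldsymbol{\alpha}=\mathbf{0}$, so the Weierstrass argument on $[0,1]^K$ does not apply verbatim. Your $\sup$ reading, combined with intersecting both sets with $\{\boldsymbol{\alpha}\neq\mathbf{0}\}$ as in your last remark, already covers this case.
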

\begin{IEEEproof}
According to the definition, the feasible set of the binary scheme satisfies $\mathcal{A}_\mathrm{bin}=\{0,1\}^K\subseteq [0,1]^K=\mathcal{A}_\mathrm{con}$. Therefore, any binary slot activation vector $\boldsymbol{\alpha}\in\mathcal{A}_\mathrm{bin}$ is also feasible for the continuous aperture adjustment problem. Since the objective function $f(\boldsymbol{\alpha})$ is maximized over a superset of feasible solutions in the continuous case, the corresponding optimal value cannot be smaller than that obtained over the subset $\mathcal{A}_{\mathrm{bin}}$, i.e.,
\begin{equation}
\max_{\boldsymbol{\alpha}\in\mathcal{A}_\mathrm{con}} f(\boldsymbol{\alpha})\ge\max_{\boldsymbol{\alpha}\in\mathcal{A}_\mathrm{bin}} f(\boldsymbol{\alpha}).
\end{equation}
This completes the proof.
\end{IEEEproof}
\subsection{Performance Analysis}
In this subsection, the performance of the proposed aperture adjustment model is analyzed for the single-user case. The radiation characteristics are captured by the aperture adjustment coefficient $\alpha_k$, which scales the leaked wave strength and hence the channel amplitude. Since the radiated field amplitude varies linearly with $\alpha_k$, the corresponding radiated power fraction from slot $k$ is proportional to $\alpha_k^2$. The received signal at user $n$ is given by
\begin{equation}
y_n=\sqrt{\frac{P_t}{A_\mathrm{eff}}}h_ns_n+n_0=\sqrt{\frac{P_t}{A_\mathrm{eff}}}\sum_{k=1}^K \alpha_k h_{k,n}s_n+n_0,
\end{equation}
where $P_t$ is the total transmit power, $A_\mathrm{eff}=\sum_{k=1}^{K}\alpha_k^2=\|\boldsymbol{\alpha}\|_2^2$ is the effective radiating aperture normalization factor, $s_n$ is the desired signal of user $n$, and $n_0$ is the additive white Gaussian noise. Specifically, $A_\mathrm{eff}$ is introduced to normalize the transmit power in accordance with the aperture adjustment coefficients, enabling a fair comparison among different aperture adjustment patterns under an identical transmit power budget. To ensure system operability, the case where all slots are deactivated is excluded by imposing $A_\mathrm{eff}\!>\!0$. The achievable data rate of user $n$ can be presented as follows:
\begin{equation}
R_n(\boldsymbol{\alpha})=\log_2\!\left(1\!+\!\frac{P_t|h_n|^2}{\sigma^2A_\mathrm{eff}}\right)=\log_2\!\left(1\!+\!\frac{P_t|\mathbf{h}_n^\mathsf{T}\boldsymbol{\alpha}|^2}{\sigma^2\|\boldsymbol{\alpha}\|_2^2}\right),
\end{equation}
where $\sigma^2$ is the noise power.

To derive conditions under which aperture adjustment can strictly outperform a given binary slot activation pattern, the achievable data rate is adopted as the objective, leading to the following proposition.
\begin{proposition}\label{strict}
For any binary activation vector $\boldsymbol{\alpha}_{\mathrm{bin}}\in\mathcal{A}_{\mathrm{bin}}$ with the active slot index set $\mathcal{S}\triangleq\{k\in\mathcal{K}|\alpha_{\mathrm{bin},k}=1\}$, where $|\mathcal{S}|\ge 1$. If either of the following inequalities holds:
\begin{subequations}\label{strictcond}
\begin{empheq}[left=\empheqlbrace]{align}
\Re\!\left\{(\mathbf{h}_n^\mathsf{T}\boldsymbol{\alpha}_\mathrm{bin})h_{k,n}^*\right\}&<\frac{|\mathbf{h}_n^\mathsf{T}\boldsymbol{\alpha}_\mathrm{bin}|^2}{\|\boldsymbol{\alpha}_\mathrm{bin}\|_2^2}, &\exists\,k\in\mathcal{S},\\
\Re\!\left\{(\mathbf{h}_n^\mathsf{T}\boldsymbol{\alpha}_\mathrm{bin})h_{k,n}^*\right\}&>0, &\exists\,k\notin\mathcal{S},
\end{empheq}
\end{subequations}
there exists a feasible $\boldsymbol{\alpha}\!\in\!\mathcal{A}_\mathrm{con}$ such that $R_n(\boldsymbol{\alpha})\!>\!R_n(\boldsymbol{\alpha}_\mathrm{bin})$.
\end{proposition}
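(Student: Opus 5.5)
Since $\log_2(1+P_t x/\sigma^2)$ is strictly increasing in $x$, it suffices to work with the normalized gain
\[
g(\boldsymbol{\alpha})=\frac{|\mathbf{h}_n^\mathsf{T}\boldsymbol{\alpha}|^2}{\|\boldsymbol{\alpha}\|_2^2},
\]
which is defined and differentiable wherever $\boldsymbol{\alpha}\neq\mathbf{0}$, and in particular at $\boldsymbol{\alpha}_\mathrm{bin}$ because $|\mathcal{S}|\ge 1$. The plan is a first-order perturbation argument. Move away from $\boldsymbol{\alpha}_\mathrm{bin}$ along a single coordinate direction $\mathbf{e}_k$ that keeps us inside $[0,1]^K$, and show that the directional derivative of $g$ is strictly positive under the stated conditions. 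A small enough step then gives a feasible $\boldsymbol{\alpha}\in\mathcal{A}_\mathrm{con}$ with $g(\boldsymbol{\alpha})>g(\boldsymbol{\alpha}_\mathrm{bin})$, hence $R_n(\boldsymbol{\alpha})>R_n(\boldsymbol{\alpha}_\mathrm{bin})$.

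Write $c=\mathbf{h}_n^\mathsf{T}\boldsymbol{\alpha}_\mathrm{bin}$ and $m=\|\boldsymbol{\alpha}_\mathrm{bin}\|_2^2=|\mathcal{S}|$, and set $\boldsymbol{\alpha}(t)=\boldsymbol{\alpha}_\mathrm{bin}+t\mathbf{e}_k$ for real $t$. Then
\[
|\mathbf{h}_n^\mathsf{T}\boldsymbol{\alpha}(t)|^2=|c|^2+2t\,\Re\{c\,h_{k,n}^*\}+t^2|h_{k,n}|^2,
\]
and $\|\boldsymbol{\alpha}(t)\|_2^2=m+2t\alpha_{\mathrm{bin},k}+t^2$. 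By the quotient rule,
\[
\frac{d}{dt}g(\boldsymbol{\alpha}(t))\Big|_{t=0}=\frac{2}{m}\left(\Re\{c\,h_{k,n}^*\}-\alpha_{\mathrm{bin},k}\frac{|c|^2}{m}\right).
\]
The two cases of the proposition correspond to the two possible values of $\alpha_{\mathrm{bin},k}$.

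\emph{Case $k\in\mathcal{S}$.} Here $\alpha_{\mathrm{bin},k}=1$, so the coefficient cannot be increased and the feasible direction is $t<0$. Condition (a) says exactly that the derivative above is negative. Therefore $g$ strictly increases as $t$ decreases from $0$. For small $\epsilon>0$ we have $\alpha_k=1-\epsilon\in(0,1)$, and $A_\mathrm{eff}>0$ still holds.

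\emph{Case $k\notin\mathcal{S}$.} Here $\alpha_{\mathrm{bin},k}=0$ and we take $t>0$. The derivative reduces to $2\Re\{c\,h_{k,n}^*\}/m$, which is positive by condition (b). So for small $\epsilon>0$ the point $\alpha_k=\epsilon\in(0,1]$ improves $g$.

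Both cases then conclude the same way. The function $t\mapsto g(\boldsymbol{\alpha}(t))$ is a ratio of quadratics whose denominator is positive near $0$, so it is $C^1$ there. A nonzero derivative in the feasible direction therefore guarantees strict improvement on some interval $(0,\epsilon_0)$ of step sizes.

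The main obstacle is purely bookkeeping rather than conceptual. One has to get the derivative of the normalization $\|\boldsymbol{\alpha}\|_2^2$ right, since it contributes the $\alpha_{\mathrm{bin},k}|c|^2/m$ term that yields the threshold in (a). One also has to match the sign of each condition to the only feasible perturbation direction at each kind of coordinate: decrease at an active slot, increase at an inactive one.
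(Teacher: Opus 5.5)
Your proof is correct and follows essentially the same route as the paper: a one-coordinate perturbation $\boldsymbol{\alpha}_\mathrm{bin}\pm\tau\mathbf{e}_k$, the quotient rule at $\tau=0$, and the sign of the derivative in the only feasible direction (partial closing of an active slot, partial opening of an inactive one). Your single derivative formula with $\alpha_{\mathrm{bin},k}$ as a parameter just merges the two cases that the paper computes separately.
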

\begin{IEEEproof}
Refer to Appendix~A.
\end{IEEEproof}

The following remark provides further insight into the conditions stated in Proposition~\ref{strict}.
\begin{remark}
The term $\Re\{(\mathbf{h}_n^\mathsf{T}\boldsymbol{\alpha}_\mathrm{bin})h_{k,n}^*\}$ measures the in-phase projection of the $k$-th slot channel onto the composite channel $\mathbf{h}_n^\mathsf{T}\boldsymbol{\alpha}_\mathrm{bin}$, thereby capturing both phase alignment and channel magnitude effects. Condition (\ref{strictcond}a) indicates that if an active slot has a projection smaller than $|\mathbf{h}_n^\mathsf{T}\boldsymbol{\alpha}_\mathrm{bin}|^2/\|\boldsymbol{\alpha}_\mathrm{bin}\|_2^2$, then partially closing its aperture increases the normalized SNR. On the other hand, condition (\ref{strictcond}b) indicates that if an inactive slot has a positive projection, then partially opening its aperture increases the normalized SNR.
\end{remark}

The following proposition characterizes an upper bound on the achievable rate gain over the conventional LCX baseline in the high-SNR regime.
\begin{proposition}\label{upperbound}
Consider the conventional LCX baseline corresponding to the fully active binary pattern $\boldsymbol{\alpha}_{\mathrm{bin}}=\mathbf{1}$. Under aperture adjustment, the coefficient vector $\boldsymbol{\alpha}$ is designed over all slots subject to $\alpha_k\in[0,1]$ for all $k\in\mathcal{K}$ and $\boldsymbol{\alpha}\neq\mathbf{0}$. In the high-SNR regime, the achievable rate gain over the conventional LCX baseline, denoted by $\Delta R_n$, can be upper bounded as follows:
\begin{equation}
0 \le\Delta R_n\lesssim-\log_2\!\left(\frac{\left|\sum_{k\in\mathcal{K}} h_{k,n}\right|^2}{K\!\sum_{k\in\mathcal{K}}|h_{k,n}|^2}\right).
\end{equation}
\end{proposition}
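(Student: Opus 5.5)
The plan is to write the rate gain explicitly and bound it in the high-SNR regime using Cauchy--Schwarz. Let $\gamma(\boldsymbol{\alpha})\triangleq|\mathbf{h}_n^\mathsf{T}\boldsymbol{\alpha}|^2/\|\boldsymbol{\alpha}\|_2^2$ denote the normalized channel gain. The gain is then
\begin{equation*}
\Delta R_n=\max_{\boldsymbol{\alpha}\in\mathcal{A}_\mathrm{con}\setminus\{\mathbf{0}\}}R_n(\boldsymbol{\alpha})-R_n(\mathbf{1}).
\end{equation*}

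\textbf{Lower bound.} The fully active pattern satisfies $\mathbf{1}\in\mathcal{A}_\mathrm{bin}\subseteq\mathcal{A}_\mathrm{con}$, so by the first proposition of this section (maximization over a superset) the maximum is at least $R_n(\mathbf{1})$. Hence $\Delta R_n\ge 0$. This part is exact and needs no high-SNR approximation.

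\textbf{Upper bound.} Two steps are needed.
\begin{enumerate}[leftmargin=*]
\item \emph{High-SNR approximation.} When $P_t\gamma/\sigma^2\gg1$, we have $\log_2(1+P_t\gamma/\sigma^2)\approx\log_2(P_t\gamma/\sigma^2)$. Therefore
\begin{equation*}
\Delta R_n\approx\log_2\frac{\max_{\boldsymbol{\alpha}}\gamma(\boldsymbol{\alpha})}{\gamma(\mathbf{1})},
\end{equation*}
and the noise power and transmit power cancel.
\item \emph{Bounding the maximum gain.} Relax the box constraint to arbitrary nonzero $\boldsymbol{\alpha}\in\mathbb{R}^K$. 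By Cauchy--Schwarz, $|\mathbf{h}_n^\mathsf{T}\boldsymbol{\alpha}|^2\le\|\mathbf{h}_n\|_2^2\|\boldsymbol{\alpha}\|_2^2$, so $\gamma(\boldsymbol{\alpha})\le\sum_{k}|h_{k,n}|^2$ over all of $[0,1]^K\setminus\{\mathbf{0}\}$. Even for complex weights this is the matched-filter gain. Since $\gamma$ is scale invariant, the box constraint only restricts directions, so it cannot raise the bound.
\end{enumerate}
Combining these with $\gamma(\mathbf{1})=|\sum_k h_{k,n}|^2/K$ gives
\begin{equation*}
\Delta R_n\lesssim\log_2\frac{K\sum_k|h_{k,n}|^2}{|\sum_k h_{k,n}|^2},
\end{equation*}
which is the stated expression with the minus sign. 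The ratio inside the logarithm lies in $(0,1]$, again by Cauchy--Schwarz, which is consistent with nonnegativity. We must assume $\sum_k h_{k,n}\neq0$; otherwise the baseline rate is zero and the bound is vacuous ($+\infty$).

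\textbf{Main obstacle.} The difficulty lies in interpreting ``$\lesssim$'' rigorously rather than in the algebra. Precisely, I would show
\begin{equation*}
\Delta R_n\le\log_2\frac{1+P_t\|\mathbf{h}_n\|_2^2/\sigma^2}{1+P_t\gamma(\mathbf{1})/\sigma^2},
\end{equation*}
which is exact by monotonicity of $\log_2(1+x)$. As $P_t/\sigma^2\to\infty$, this tends to the stated ratio. I would state this limiting argument explicitly. I would also remark that the bound is attained only when $\mathbf{h}_n^*$ is proportional to a nonnegative real vector, that is, when the slot channels are phase-coherent, which links the result to the channel-coherence interpretation.
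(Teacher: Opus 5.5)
Your proposal is correct and follows essentially the same route as the paper. Both arguments use Cauchy--Schwarz to get $\gamma_n(\boldsymbol{\alpha})\le\sum_{k\in\mathcal{K}}|h_{k,n}|^2$, then obtain the exact bound $\log_2\bigl((1+\frac{P_t}{\sigma^2}\sum_k|h_{k,n}|^2)/(1+\frac{P_t}{\sigma^2}|\sum_k h_{k,n}|^2/K)\bigr)$ whose high-SNR form is the stated expression, and use feasibility of $\mathbf{1}$ for $\Delta R_n\ge 0$. Your definition of $\Delta R_n$ through the maximum and your requirement $\sum_k h_{k,n}\neq 0$ make precise two points the paper leaves implicit. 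Moreover, since $\sum_k|h_{k,n}|^2\ge|\sum_k h_{k,n}|^2/K$, the exact bound is nondecreasing in $P_t/\sigma^2$, so your limiting argument in fact shows the stated bound holds with $\le$ at every SNR.
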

\begin{IEEEproof}
Refer to Appendix~B.
\end{IEEEproof}

The implications of Proposition~\ref{upperbound} are further discussed in the following remarks.
\begin{remark}
The active slot index set $\mathcal{S}$ in Proposition~\ref{strict} can represent any binary activation pattern, including an optimal binary solution. In Proposition~\ref{upperbound}, the conventional LCX baseline with all slots active corresponds to the special case $\mathcal{S}=\mathcal{K}$. In both settings, aperture adjustment can provide further improvement through fractional aperture control.
\end{remark}

\begin{remark}
The high-SNR rate gain upper bound in Proposition~\ref{upperbound} is determined by the following coherence factor
\begin{equation}
\rho(\mathcal{K})\triangleq\frac{\left|\sum_{k\in\mathcal{K}}h_{k,n}\right|^2}{K\!\sum_{k\in\mathcal{K}}|h_{k,n}|^2},
\end{equation}
where $0<\rho(\mathcal{K})\le 1$. This factor quantifies the degree of constructive combining among the slot induced channels under binary activation. Specifically, $\rho(\mathcal{K})\approx 1$ indicates near phase alignment and negligible gain, whereas $\rho(\mathcal{K})\ll 1$ corresponds to severe phase dispersion, under which amplitude re-weighting enabled by aperture adjustment can provide a larger improvement.
\end{remark}

\begin{remark}
By writing $h_{k,n}=|h_{k,n}|e^{j\phi_{k,n}}$, it follows that
\begin{equation}
\left|\sum_{k\in\mathcal{K}}\!h_{k,n}\right|^2\!\!=\!\sum_{k\in\mathcal{K}}\!|h_{k,n}|^2\!+\!2\!\!\sum_{\substack{i<j\\i,j\in\mathcal{K}}}\!\!|h_{i,n}||h_{j,n}|\cos(\phi_{i,n}\!-\!\phi_{j,n}),
\end{equation}
which implies that $\rho(\mathcal{K})$ decreases with phase dispersion via the cosine cross terms and is further influenced by amplitude imbalance across slots.
\end{remark}

In the considered LCX pinching-antenna system, the channel phase $\phi_{k,n}$ typically varies rapidly with the slot index $k$ due to the combined effects of guided wave propagation along the cable and the subsequent free-space radiation paths. As a result, the cross terms $\cos(\phi_{i,n}-\phi_{j,n})$ in the coherent summation are often negative, leading to pronounced destructive combining across slots. This reduces $|\sum_{k\in\mathcal{K}}h_{k,n}|^2$, thereby enlarging the performance gap between the full activation binary baseline and aperture adjustment based amplitude re-weighting.
\subsection{Signal Model}
Based on the LCX pinching-antenna channel model with adjustable slot apertures, a TDMA transmission framework is considered, providing two different aperture adjustment strategies.
\subsubsection{TDMA with Static Aperture Adjustment}
In the TDMA framework with static aperture adjustment, the slot aperture configuration is determined prior to transmission and remains fixed across all users' time slots. Specifically, a common aperture adjustment vector is designed to balance the overall channel conditions of the served users, without adapting to individual user channels in each TDMA interval. During transmission, users are served sequentially in orthogonal time slots while the radiation distribution of the LCX remains unchanged. This strategy reduces implementation complexity by avoiding real-time reconfiguration of slot apertures and enables a unified radiation structure that provides stable coverage across the service region. The received signal for user $n$ is given by
\begin{equation}
y_n^\mathrm{sta}=\!\sqrt{\frac{P_t}{A_\mathrm{eff}}}h_ns_n\!+\!n_0=\!\sqrt{\frac{P_t}{A_\mathrm{eff}}}\!\sum_{k=1}^K\!\alpha_k h_{k,n}s_n\!+\!n_0.
\end{equation}
The achievable data rate of user $n$ can be presented as follows:
\begin{equation}
R_n^\mathrm{sta}=\frac{1}{N}\log_2\!\left(1\!+\!\frac{P_t|h_n|^2}{\sigma^2A_\mathrm{eff}}\right)\!=\!\frac{1}{N}\log_2\!\left(1\!+\!\frac{P_t|\mathbf{h}_n^\mathsf{T}\boldsymbol{\alpha}|^2}{\sigma^2\|\boldsymbol{\alpha}\|_2^2}\right).
\end{equation}
\subsubsection{TDMA with Dynamic Aperture Adjustment}
In TDMA with dynamic aperture adjustment, the slot aperture pattern is designed on a per-user basis and updated across TDMA slots. Specifically, when user $n$ is scheduled, a dedicated aperture adjustment vector is applied for that user's transmission interval and can be reconfigured for subsequent users. Accordingly, an additional subscript is introduced for the activation coefficient, i.e., $\alpha_{k,n}$, to denote the activation state of slot $k$ during the transmission interval of user $n$. This user-specific reconfiguration enables customized amplitude weighted radiation patterns that better match each user's channel, at the cost of increased control and switching overhead. The received signal for user $n$ at the corresponding time slot is given by
\begin{equation}
y_n^\mathrm{dyn}=\!\sqrt{\frac{P_t}{A_{\mathrm{eff},n}}}h_ns_n\!+\!n_0=\!\sqrt{\frac{P_t}{A_{\mathrm{eff},n}}}\!\sum_{k=1}^K\!\alpha_{k,n} h_{k,n}s_n\!+\!n_0.
\end{equation}
Hence, the achievable data rate of user $n$ is
\begin{equation}
R_n^\mathrm{dyn}=\frac{1}{N}\log_2\!\left(1\!+\!\frac{P_t|h_n|^2}{\sigma^2A_{\mathrm{eff},n}}\right)\!=\!\frac{1}{N}\log_2\!\left(1\!+\!\frac{P_t|\mathbf{h}_n^\mathsf{T}\boldsymbol{\alpha}_n|^2}{\sigma^2\|\boldsymbol{\alpha}_n\|_2^2}\right),
\end{equation}
where $\boldsymbol{\alpha}_n$ is the aperture adjustment vector associated with the transmission interval of user $n$.
\section{Problem Formulation}
In the proposed LCX pinching-antenna system, the slot aperture adjustment coefficients provide a flexible mechanism to shape the radiated field distribution, thereby improving signal combining at the users. This section considers the optimization of the aperture coefficients to maximize the overall system throughput subject to quality-of-service (QoS) constraints. The corresponding optimization problem for the static aperture adjustment is formulated as follows.
\begin{subequations}
\begin{empheq}{align}
\max_{\boldsymbol{\alpha}}\quad & \sum_{n=1}^NR_n^\mathrm{sta}\\
\textrm{s.t.} \quad & \mathbf{0}\preceq \boldsymbol{\alpha}\preceq \mathbf{1},\\
& \|\boldsymbol{\alpha}\|_2^2>0,\\
& R_n^\mathrm{sta} \ge R_\mathrm{min},\ \forall n\in\mathcal{N}.
\end{empheq}
\label{staproblem}
\end{subequations}\vspace{-2mm}\\
In problem \eqr{staproblem}, constraint (\ref{staproblem}b) indicates the feasible range of the aperture adjustment coefficients, and constraint (\ref{staproblem}c) ensures that at least one slot remains active to maintain system operability. In constraint (\ref{staproblem}d), a minimum data rate $R_\mathrm{min}$ is guaranteed for each user.

Similarly, the optimization problem under dynamic aperture adjustment is formulated as follows:
\begin{subequations}
\begin{empheq}{align}
\max_{\{\boldsymbol{\alpha}_n\}_{n\in\mathcal{N}}}\quad & \sum_{n=1}^NR_n^\mathrm{dyn}\\
\textrm{s.t.} \quad & \mathbf{0}\preceq \boldsymbol{\alpha}_n \preceq \mathbf{1},\ \forall n\in\mathcal{N},\\
& \|\boldsymbol{\alpha}_n\|_2^2>0,\ \forall n\in\mathcal{N},\\
& R_n^\mathrm{dyn} \ge R_\mathrm{min},\ \forall n\in\mathcal{N},
\end{empheq}
\label{dynproblem}
\end{subequations}\vspace{-2mm}\\
where $\{\boldsymbol{\alpha}_n\}_{n\in\mathcal{N}}$ is the collection of user-specific aperture vectors. The constraints remain identical in interpretation to those in the static case, with user-specific aperture coefficients.
\section{Solution for Static Aperture Adjustment}
By defining $\rho\triangleq\frac{P_t}{\sigma^2}$, the achievable data rate of user $n$ under static aperture adjustment can be rewritten as follows:
\begin{equation}
R_n^\mathrm{sta}=\frac{1}{N}\log_2(1+\rho f_n(\boldsymbol{\alpha})),
\end{equation}
where
\begin{equation}
f_n(\boldsymbol{\alpha})=\frac{|\mathbf{h}_n^\mathsf{T}\boldsymbol{\alpha}|^2}{\|\boldsymbol{\alpha}\|_2^2}.
\end{equation}
Moreover, the QoS constraint in (\ref{staproblem}d) can be rewritten as $f_n(\boldsymbol{\alpha})\ge \gamma_\mathrm{min}$ with
\begin{equation}\label{gamma}
\gamma_\mathrm{min}\triangleq\frac{1}{\rho}\!\left(2^{NR_\mathrm{min}}-1\right).
\end{equation}
The objective contains $\log_2(1+\rho f_n(\boldsymbol{\alpha}))$, which is concave with $f_n(\boldsymbol{\alpha})\ge 0$. Since the factor $1/N$ is a positive constant, it is omitted in the following derivations. To enable a tractable iterative algorithm, a first-order lower bound of the concave function is applied at the $t$-th iteration point $f_n(\boldsymbol{\alpha}^{(t)})$ as follows:
\begin{align}
\log_2(1\!+\!\rho f_n(\boldsymbol{\alpha}))\ge &\log_2(1\!+\!\rho f_n(\boldsymbol{\alpha}^{(t)}))\\\nonumber
&+\!\frac{\rho}{\ln(2)(1\!+\!\rho f_n(\boldsymbol{\alpha}^{(t)}))}(f_n(\boldsymbol{\alpha})\!-\!\!f_n(\boldsymbol{\alpha}^{(t)})).
\end{align}
By defining a weight as follows:
\begin{equation}\label{weight1}
\omega_n^{(t)}\triangleq\frac{\rho}{\ln(2)(1+\rho f_n(\boldsymbol{\alpha}^{(t)}))},
\end{equation}
the objective function at the $t$-th iteration is given by
\begin{equation}
\sum_{n=1}^N\log_2(1\!+\!\rho f_n(\boldsymbol{\alpha}^{(t)}))\!+\!\sum_{n=1}^N\omega_n^{(t)}(f_n(\boldsymbol{\alpha})\!-\!\!f_n(\boldsymbol{\alpha}^{(t)})).
\end{equation}
By removing the constants, the following problem is solved at the $t$-th iteration:
\begin{subequations}
\begin{empheq}{align}
\max_{\boldsymbol{\alpha}}\quad & \sum_{n=1}^N\omega_n^{(t)}f_n(\boldsymbol{\alpha})\\
\textrm{s.t.} \quad & f_n(\boldsymbol{\alpha}) \ge \gamma_\mathrm{min},\ \forall n\in\mathcal{N},\\\nonumber
& \text{(\ref{staproblem}b)}, \text{and (\ref{staproblem}c)}.
\end{empheq}
\label{staproblem1}
\end{subequations}\vspace{-2mm}

By introducing an auxiliary variable $\beta_n$ for each user $n$ and applying the quadratic transform in \cite{shen2018fp}, the following expression can be obtained:
\begin{equation}\label{qt1}
f_n(\boldsymbol{\alpha})=\max_{\beta_n\in\mathbb{C}}\ g_n(\boldsymbol{\alpha},\beta_n),
\end{equation}
where
\begin{equation}
g_n(\boldsymbol{\alpha},\beta_n)\triangleq 2\Re\{\beta_n^*\mathbf{h}_n^\mathsf{T}\boldsymbol{\alpha}\}-|\beta_n|^2\|\boldsymbol{\alpha}\|_2^2.
\end{equation}
Note that \eqr{qt1} indicates that
\begin{equation}
f_n(\boldsymbol{\alpha}) \ge g_n(\boldsymbol{\alpha},\beta_n).
\end{equation}
Moreover, for any given $\boldsymbol{\alpha}\neq\mathbf{0}$, the optimal value of $\beta_n$ can be obtained in closed form. Accordingly, at the $t$-th iteration, the auxiliary variable is updated as
\begin{equation}\label{beta1}
\beta_n^{(t)}=\frac{\mathbf{h}_n^\mathsf T\boldsymbol{\alpha}^{(t)}}{\|\boldsymbol{\alpha}^{(t)}\|_2^2}.
\end{equation}
By fixing $\{\beta_n^{(t)}\}_{n\in\mathcal{N}}$, problem \eqr{staproblem1} can be approximated by the following convex problem:
\begin{subequations}
\begin{empheq}{align}
\max_{\boldsymbol{\alpha}}\quad & \sum_{n=1}^N\omega_n^{(t)} g_n(\boldsymbol{\alpha},\beta_n^{(t)})\\
\textrm{s.t.}\quad & g_n(\boldsymbol{\alpha},\beta_n^{(t)}) \ge \gamma_\mathrm{min},\ \forall n\in\mathcal{N},\\\nonumber
& \text{(\ref{staproblem}b)}, \text{and (\ref{staproblem}c)}.
\end{empheq}
\label{staproblem2}
\end{subequations}\vspace{-2mm}\\
Since $f_n(\boldsymbol{\alpha}) \ge g_n(\boldsymbol{\alpha},\beta_n^{(t)})$, the QoS constraint in (\ref{staproblem1}b) can be guaranteed by (\ref{staproblem2}b). Problem \eqr{staproblem2} corresponds to the problem solved at the $t$-th iteration, where $\{\omega_n^{(t)}\}$ and $\{\beta_n^{(t)}\}$ are fixed coefficients updated based on the previous iterate  $\boldsymbol{\alpha}^{(t)}$. 

For fixed $\{\beta_n^{(t)}\}$, function $g_n(\boldsymbol{\alpha},\beta_n^{(t)})$ is concave quadratic in $\boldsymbol{\alpha}$, since it consists of an affine term $2\Re\{\beta_n^{(t)*}\mathbf{h}_n^\mathsf{T}\boldsymbol{\alpha}\}$ and a negative quadratic term $-|\beta_n^{(t)}|^2\|\boldsymbol{\alpha}\|_2^2$. Therefore, the objective in \eqr{staproblem2} is concave in $\boldsymbol{\alpha}$. Moreover, the constraint $g_n(\boldsymbol{\alpha},\beta_n^{(t)})\ge \gamma_{\min}$ defines a convex set since it is the superlevel set of a concave function. In addition, the remaining constraints in (\ref{staproblem}b) and (\ref{staproblem}c) are convex. Hence, \eqr{staproblem2} is a convex optimization problem and can be efficiently solved by standard solvers, e.g., CVX.

The overall iterative procedure is summarized as follows. Given a feasible initialization $\boldsymbol{\alpha}^{(0)}$ and the iteration index $t\in\{0,1,\dots\}$, the following quantities are first computed for all $n\in\mathcal{N}$:
\begin{equation}
f_n(\boldsymbol{\alpha}^{(t)})=\frac{|\mathbf{h}_n^\mathsf{T}\boldsymbol{\alpha}^{(t)}|^2}{\|\boldsymbol{\alpha}^{(t)}\|_2^2}.
\end{equation}
Then, the weight and auxiliary variables are updated according to \eqr{weight1} and \eqr{beta1}. Afterwards, the convex problem in \eqr{staproblem2} is solved to obtain the updated aperture vector $\boldsymbol{\alpha}^{(t+1)}$. This procedure is repeated until convergence, e.g., when the increment of the original objective value $\sum_{n=1}^N R_n^\mathrm{sta}$ falls below a prescribed threshold. During this procedure, constraint (\ref{staproblem}c) is implemented as $\|\boldsymbol{\alpha}\|_2^2\ge \varepsilon$ with a small constant $\varepsilon>0$ to exclude the trivial all-zero solution and avoid numerical issues.
\section{Solution for Dynamic Aperture Adjustment}
Under the TDMA framework with dynamic aperture adjustment, each user is served in an orthogonal time slot with a dedicated aperture vector. Therefore, the aperture coefficients used for different users do not couple with each other. In particular, both the objective function and the constraints in \eqr{dynproblem} are separable with respect to $\{\boldsymbol{\alpha}_n\}_{n\in\mathcal{N}}$. As a result, problem \eqr{dynproblem} is decomposed into $N$ independent subproblems. For user $n$, the corresponding subproblem is given by
\begin{subequations}
\begin{empheq}{align}
\max_{\boldsymbol{\alpha}_n}\quad & R_n^\mathrm{dyn}\\
\textrm{s.t.} \quad & \mathbf{0}\preceq \boldsymbol{\alpha}_n \preceq \mathbf{1},\\
& \|\boldsymbol{\alpha}_n\|_2^2>0,\\
& R_n^\mathrm{dyn} \ge R_\mathrm{min}.
\end{empheq}
\label{dynproblem1}
\end{subequations}\vspace{-2mm}\\
Since $\log_2(1+x)$ is monotonically increasing in $x$, maximizing $R_n^\mathrm{dyn}$ is equivalent to maximizing the following term:
\begin{equation}
f_n(\boldsymbol{\alpha}_n)\triangleq\frac{|\mathbf{h}_n^\mathsf{T}\boldsymbol{\alpha}_n|^2}{\|\boldsymbol{\alpha}_n\|_2^2}.
\end{equation}
Therefore, problem \eqr{dynproblem1} is equivalent to the following problem:
\begin{subequations}
\begin{empheq}{align}
\max_{\boldsymbol{\alpha}_n}\quad & f_n(\boldsymbol{\alpha}_n)\\
\textrm{s.t.} \quad & f_n(\boldsymbol{\alpha}_n) \ge \gamma_\mathrm{min},\\\nonumber
& \text{(\ref{dynproblem1}b)}, \text{and (\ref{dynproblem1}c)},
\end{empheq}
\label{dynproblem2}
\end{subequations}\vspace{-2mm}\\
where $\gamma_\mathrm{min}$ follows the definition in the static case as \eqr{gamma}.

By introducing an auxiliary variable $\beta$, quadratic transform can be utilized to obtain the following equation:
\begin{equation}
f_n(\boldsymbol{\alpha}_n)=\max_{\beta\in\mathbb{C}} g(\boldsymbol{\alpha}_n,\beta),
\end{equation}
where
\begin{equation}
g(\boldsymbol{\alpha}_n,\beta)=2\Re\{\beta^*\mathbf{h}_n^\mathsf{T}\boldsymbol{\alpha}_n\}-|\beta|^2\|\boldsymbol{\alpha}_n\|_2^2.
\end{equation}
Given $\boldsymbol{\alpha}_n^{(t)}\neq\mathbf{0}$, the optimality condition of the quadratic transform leads to the auxiliary variable update as follows:
\begin{equation}\label{beta2}
\beta^{(t)}=\frac{\mathbf{h}_n^\mathsf{T}\boldsymbol{\alpha}_n^{(t)}}{\|\boldsymbol{\alpha}_n^{(t)}\|_2^2}.
\end{equation}
With the fixed auxiliary variable at the $t$-th iteration, $\boldsymbol{\alpha}_n$ can be updated by solving the following problem:
\begin{subequations}
\begin{empheq}{align}
\max_{\boldsymbol{\alpha}_n}\quad & g(\boldsymbol{\alpha}_n,\beta^{(t)})\\
\textrm{s.t.}\quad & g(\boldsymbol{\alpha}_n,\beta^{(t)}) \ge \gamma_\mathrm{min},\\\nonumber
& \text{(\ref{dynproblem1}b)}, \text{and (\ref{dynproblem1}c)}.
\end{empheq}
\label{dynproblem3}
\end{subequations}\vspace{-2mm}

Unlike the static design where a common aperture vector jointly affects all users, the dynamic formulation considers user-specific aperture vectors in orthogonal TDMA slots. As a result, each subproblem aims at maximizing the data rate of the corresponding user. In this case, the QoS constraint $g(\boldsymbol{\alpha}_n,\beta^{(t)})\ge\gamma_\mathrm{min}$ serves only as a feasibility condition. If the maximum achievable value of $g(\boldsymbol{\alpha}_n,\beta^{(t)})$ subject to the box constraints is below $\gamma_\mathrm{min}$, then the target rate $R_\mathrm{min}$ is infeasible for user $n$ under the current system parameters. Therefore, the QoS constraint can be omitted in the optimization step and verified after convergence. Accordingly, problem \eqr{dynproblem3} is rewritten as follows:
\begin{subequations}
\begin{empheq}{align}
\max_{\boldsymbol{\alpha}_n}\quad & g(\boldsymbol{\alpha}_n,\beta^{(t)})\\
\textrm{s.t.}\quad & \mathbf{0}\preceq \boldsymbol{\alpha}_n \preceq \mathbf{1}.
\end{empheq}
\label{dynproblem4}
\end{subequations}\vspace{-2mm}\\
Note that constraint (\ref{dynproblem1}c) only excludes the trivial all-zero solution to ensure a strictly positive denominator in the auxiliary update. Since $\boldsymbol{\alpha}_n=\mathbf{0}$ gives $g(\mathbf{0},\beta^{(t)})=0$ and is not optimal for \eqr{dynproblem4} except for degenerate cases where $\mathbf{a}_n^{(t)}\preceq\mathbf{0}$, omitting (\ref{dynproblem1}c) does not change the optimal solution. A nonzero initialization $\boldsymbol{\alpha}_n^{(0)}\neq\mathbf{0}$ is therefore adopted. For a fixed auxiliary variable $\beta^{(t)}$, by defining
\begin{equation}
\mathbf{a}_n^{(t)}\triangleq\Re\!\left\{(\beta^{(t)})^*\mathbf{h}_n\right\},
\end{equation}
and
\begin{equation}
b_n^{(t)}\triangleq |\beta^{(t)}|^2>0,
\end{equation}
the objective function in \eqr{dynproblem4} can be written as follows:
\begin{equation}
g(\boldsymbol{\alpha}_n,\beta^{(t)})=2(\mathbf{a}_n^{(t)})^\mathsf{T}\boldsymbol{\alpha}_n-b_n^{(t)}\|\boldsymbol{\alpha}_n\|_2^2.
\end{equation}
Therefore, problem \eqr{dynproblem4} is equivalent to the following form:
\begin{subequations}
\begin{empheq}{align}
\max_{\boldsymbol{\alpha}_n}\quad & 2(\mathbf{a}_n^{(t)})^\mathsf{T}\boldsymbol{\alpha}_n-b_n^{(t)}\|\boldsymbol{\alpha}_n\|_2^2\\
\textrm{s.t.}\quad & \mathbf{0}\preceq \boldsymbol{\alpha}_n \preceq \mathbf{1}.
\end{empheq}
\label{dynproblem5}
\end{subequations}\vspace{-2mm}\\
Since the objective function can be expressed as follows:
\begin{equation}
2(\mathbf{a}_n^{(t)})^\mathsf{T}\boldsymbol{\alpha}_n-b_n^{(t)}\|\boldsymbol{\alpha}_n\|_2^2=\sum_{k=1}^K \left(2a_{k,n}^{(t)}\alpha_{k,n}-b_n^{(t)}\alpha_{k,n}^2\right),
\end{equation}
problem \eqr{dynproblem5} is a separable concave quadratic program. As a result, problem \eqr{dynproblem5} can be decomposed into $K$ independent scalar subproblems. For any $k\in\mathcal{K}$, the optimal $\alpha_{k,n}$ can be obtained by solving the following problem:
\begin{subequations}
\begin{empheq}{align}
\max_{\alpha_{k,n}}\quad & 2a_{k,n}^{(t)}\alpha_{k,n}-b_n^{(t)}\alpha_{k,n}^2\\
\textrm{s.t.}\quad & 0\le\alpha_{k,n}\le 1.
\end{empheq}
\label{dynproblem6}
\end{subequations}\vspace{-2mm}\\
In problem \eqr{dynproblem6}, the objective is a concave quadratic function
\begin{equation}
\phi(\alpha_{k,n})=2a_{k,n}^{(t)}\alpha_{k,n}-b_n^{(t)}\alpha_{k,n}^2,
\end{equation}
where $b_n^{(t)}>0$. By ignoring constraint (\ref{dynproblem6}b), the stationary point can be obtained by setting the first-order derivative to zero, i.e.,
\begin{equation}
\frac{\partial\phi(\alpha_{k,n})}{\partial\alpha_{k,n}}=2a_{k,n}^{(t)}-2b_n^{(t)}\alpha_{k,n}=0,
\end{equation}
which leads to
\begin{equation}\label{solution}
\alpha_{k,n}=\frac{a_{k,n}^{(t)}}{b_n^{(t)}}.
\end{equation}
Note that the second-order derivative satisfies
\begin{equation}
\frac{\partial^2\phi(\alpha_{k,n})}{\partial\alpha_{k,n}^2}=-2b_n^{(t)}<0,
\end{equation}
which indicates that $\phi(\alpha_{k,n})$ is strictly concave. Therefore, \eqr{solution} gives the unique optimal solution of the unconstrained problem. By incorporating constraint (\ref{dynproblem6}b), the closed-form update can be presented as follows:
\begin{equation}\label{alpha2}
\alpha_{k,n}^{(t+1)}=\min\left\{1,\max\left\{0,\frac{a_{k,n}^{(t)}}{b_n^{(t)}}\right\}\right\}, \ \forall k\in\mathcal{K}.
\end{equation}
After obtaining $\boldsymbol{\alpha}_n^{(t+1)}$, $\beta^{(t+1)}$ is updated according to \eqref{beta2}. The QoS requirement can be verified by checking whether $g(\boldsymbol{\alpha}_n^{(t+1)},\beta^{(t+1)})\ge \gamma_\mathrm{min}$. If the condition is not satisfied, the target rate $R_\mathrm{min}$ is declared infeasible for user $n$ under the current system parameters.

The above closed-form update leads to a simple alternating optimization procedure. Given a nonzero initialization $\boldsymbol{\alpha}_n^{(0)}\neq\mathbf{0}$ and the iteration index $t\in\{0,1,\dots\}$, the aperture vector and the auxiliary variable are updated based on \eqr{alpha2} and \eqr{beta2}, respectively. This procedure is repeated until convergence.
\section{Simulation Results}
\begin{table}[t]
\centering
\caption{Simulation Parameters}\vspace{-1mm}
\label{parameter}
\begin{tabular}{lc}
\hline
\textbf{Parameter} & \textbf{Value} \\ \hline
Carrier frequency ($f_c$) & $3.5$~GHz\\
Noise power ($\sigma^2$) & $-64$~dBm\\ 
Attenuation constant ($\kappa$) & $0.1$~dB/m \\
Relative permittivity ($\varepsilon_r$) & $1.26$ \\
Cable height ($d$) & $3$~m \\
Region size ($D_x$) & $50$~m \\
Number of scatterers ($L$) & $20$ \\
Scattering path gain ($\delta_\ell$) & $\delta_\ell \sim \mathcal{CN}(0,1)$ \\ \hline
\end{tabular}
\end{table}

In this section, simulation results are provided to evaluate the performance of the proposed LCX pinching-antenna system. In the simulation, users are randomly distributed on the ground plane within the service region, while the scatterers are randomly placed on the surrounding walls to model reflected propagation paths. For sum rate calculation, only users whose achievable data rates exceed the target rate are included, and the outage probability is measured as the fraction of users whose achievable data rates fall below the target rate. As benchmarks, two conventional systems are considered: i) a fixed-antenna system with $K$ antennas deployed at the BS with an inter-element spacing of $\lambda/2$ at the same height of $3$~m, and ii) a conventional LCX system in which all slots are fully open. For the LCX based schemes, both binary slot activation and the proposed aperture adjustment adopt the conventional LCX configuration as the initial state. The main simulation parameters are summarized in Table~\ref{parameter}.

\begin{figure}[!t]
\centering{
\subfigure[Sum Rate]{\centering{\includegraphics[width=82mm]{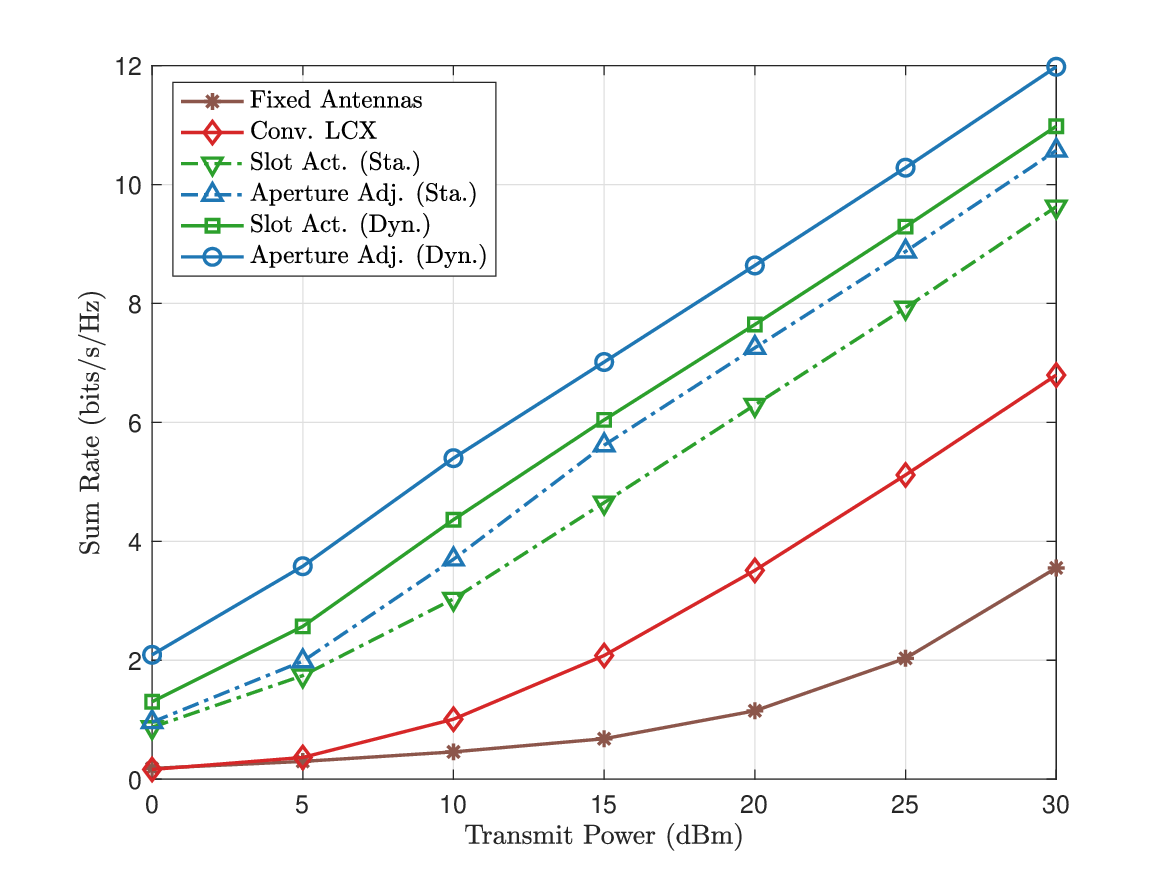}}}\vspace{-2mm}
\subfigure[Outage Probability]{\centering{\includegraphics[width=82mm]{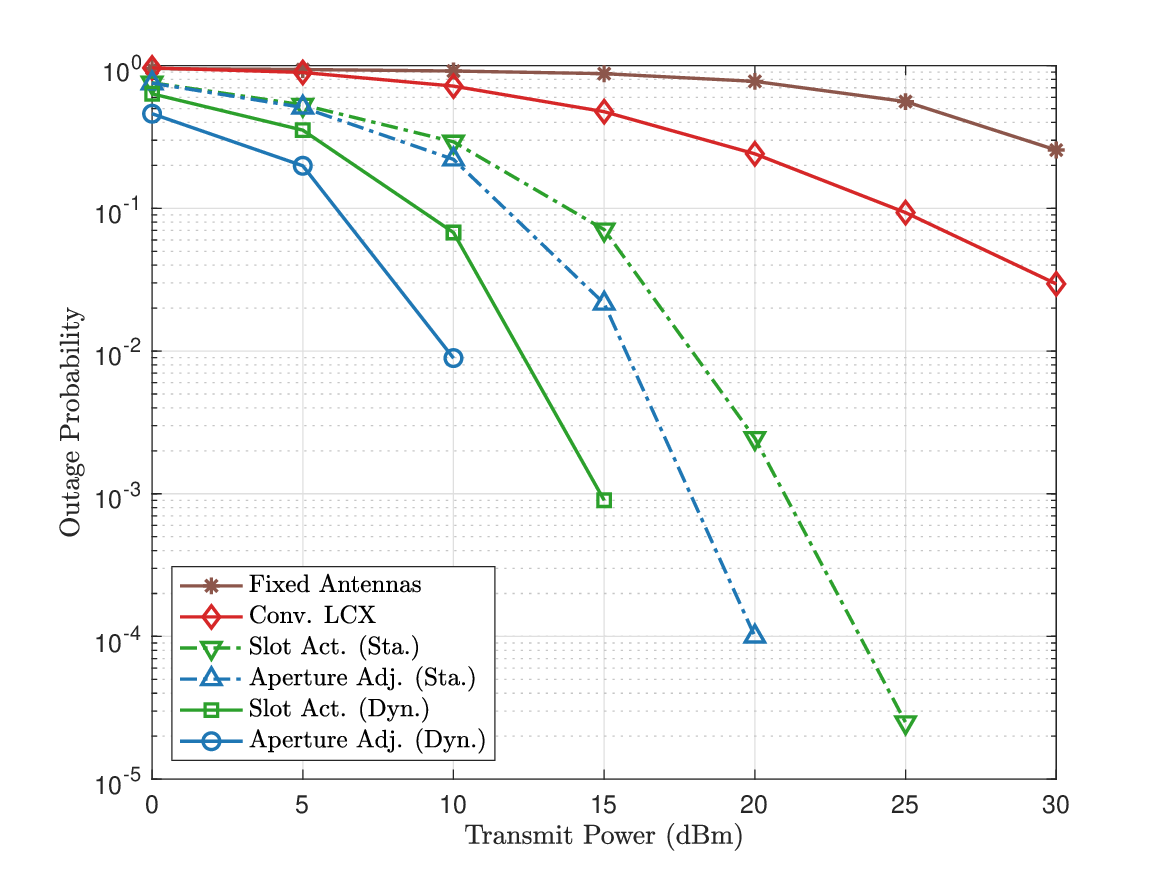}}}}\vspace{-2mm}
\caption{Impact of the transmit power on the sum rate and outage probability, where $D_y=20$~m, $K=50$, $N=4$, and $R_\mathrm{min}=0.5$~bits/s/Hz.}\vspace{-4mm}
\label{result1}
\end{figure}

\fref{result1} illustrates the impact of the transmit power on the system performance in terms of the sum rate and the outage probability. As the transmit power increases, the achievable sum rate improves monotonically, while the outage probability decreases correspondingly. Among the baselines, the conventional LCX system outperforms the fixed-antenna system in both metrics, owing to its distributed radiation characteristic that provides more uniform coverage across the service region. By enabling controllable radiation at specific slot locations, both slot activation and aperture adjustment further improve the performance relative to the conventional LCX baseline. In particular, slot-level radiation control enhances signal combining at the users, thereby increasing the sum rate and reducing the outage probability. Aperture adjustment consistently outperforms slot activation due to the additional flexibility offered by continuous aperture control, which enlarges the feasible design space and enables more refined amplitude re-weighting across slots, as supported by Propositions~1 and~2. In addition, the dynamic schemes achieve higher sum rates and lower outage probabilities than the corresponding static schemes, since user-specific design in orthogonal TDMA slots offers greater flexibility than a common aperture vector shared by all users. Moreover, the curves of slot activation and aperture adjustment enter the approximately linear growth region earlier than those of the conventional LCX and fixed-antenna baselines, indicating that slot-level control improves the effective channel combining gain and allows the system to reach the high-SNR regime at a lower transmit power.

\begin{figure}[!t]
\centering{
\subfigure[Sum Rate]{\centering{\includegraphics[width=82mm]{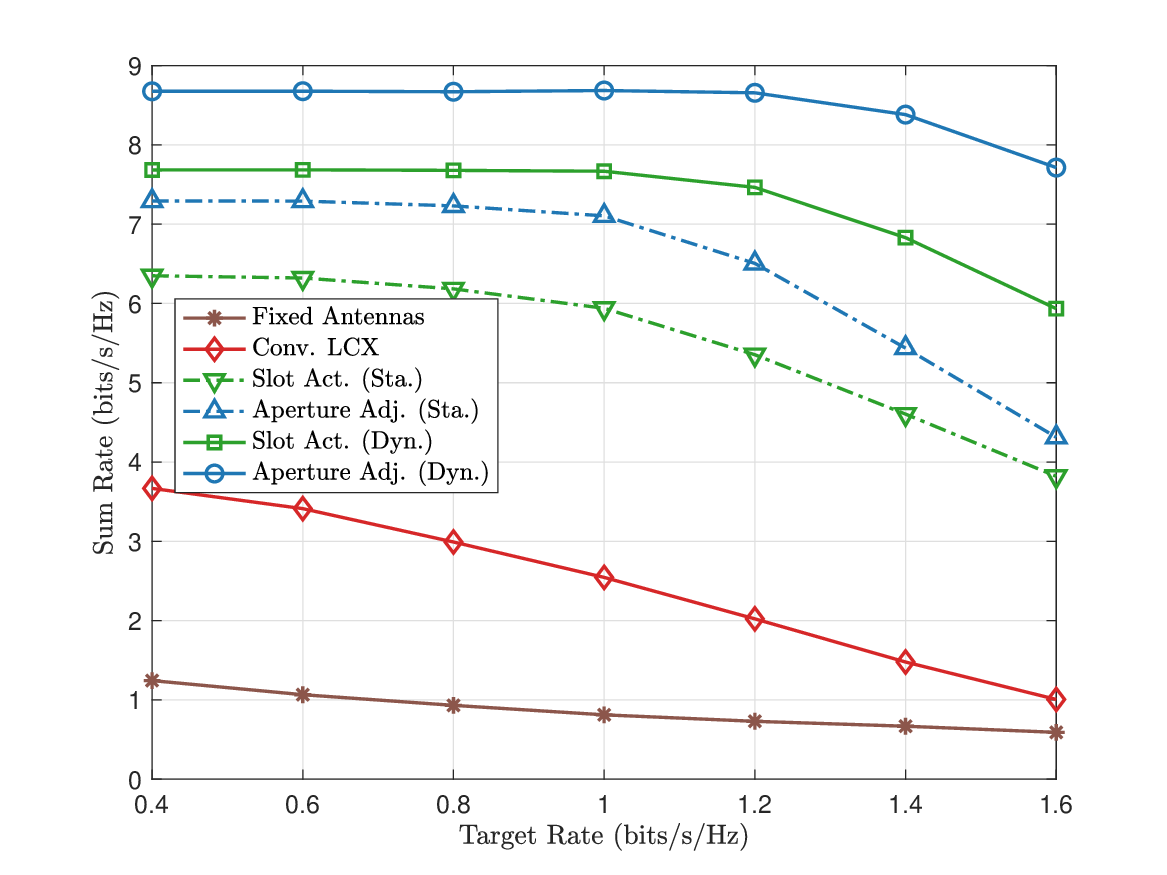}}}\vspace{-2mm}
\subfigure[Outage Probability]{\centering{\includegraphics[width=82mm]{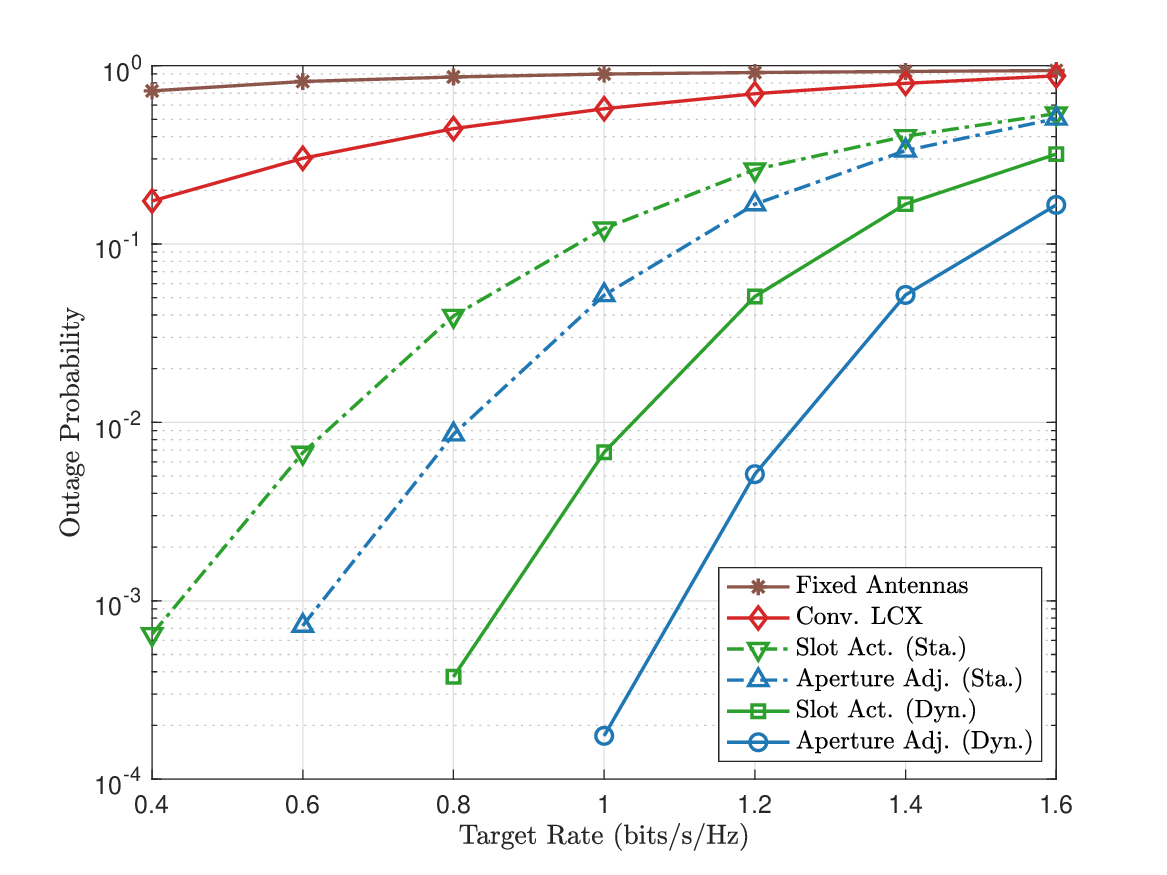}}}}\vspace{-2mm}
\caption{Impact of the target rate on the sum rate and outage probability, where $D_y=20$~m, $K=50$, $N=4$, and $P_t=20$~dBm.}\vspace{-4mm}
\label{result2}
\end{figure}

\fref{result2} presents the sum rate and outage probability as functions of the target rate $R_\mathrm{min}$. It can be observed that, for all schemes, increasing the target rate leads to a reduction in the sum rate and a corresponding increase in the outage probability. Compared with the conventional LCX and fixed-antenna baselines, both slot activation and aperture adjustment achieve significantly improved performance. In particular, when the target rate is relatively small, these two schemes are able to maintain the sum rate at nearly the same level, since most users can still satisfy the QoS requirement. As the target rate further increases, however, fewer users can meet the constraint, and the sum rate of all schemes begins to decrease. Moreover, aperture adjustment consistently outperforms the corresponding slot activation schemes in both the static and dynamic cases due to its greater flexibility in slot amplitude control. It can also be observed that the performance gap between the static and dynamic schemes becomes larger as the target rate increases. This is because a higher target rate imposes more stringent QoS constraints, under which the static scheme is increasingly limited by the need to satisfy multiple users with a common aperture configuration, while the dynamic scheme can still adapt the aperture pattern to each user's channel.

\begin{figure}[!t]
\centering{
\subfigure[Sum Rate]{\centering{\includegraphics[width=82mm]{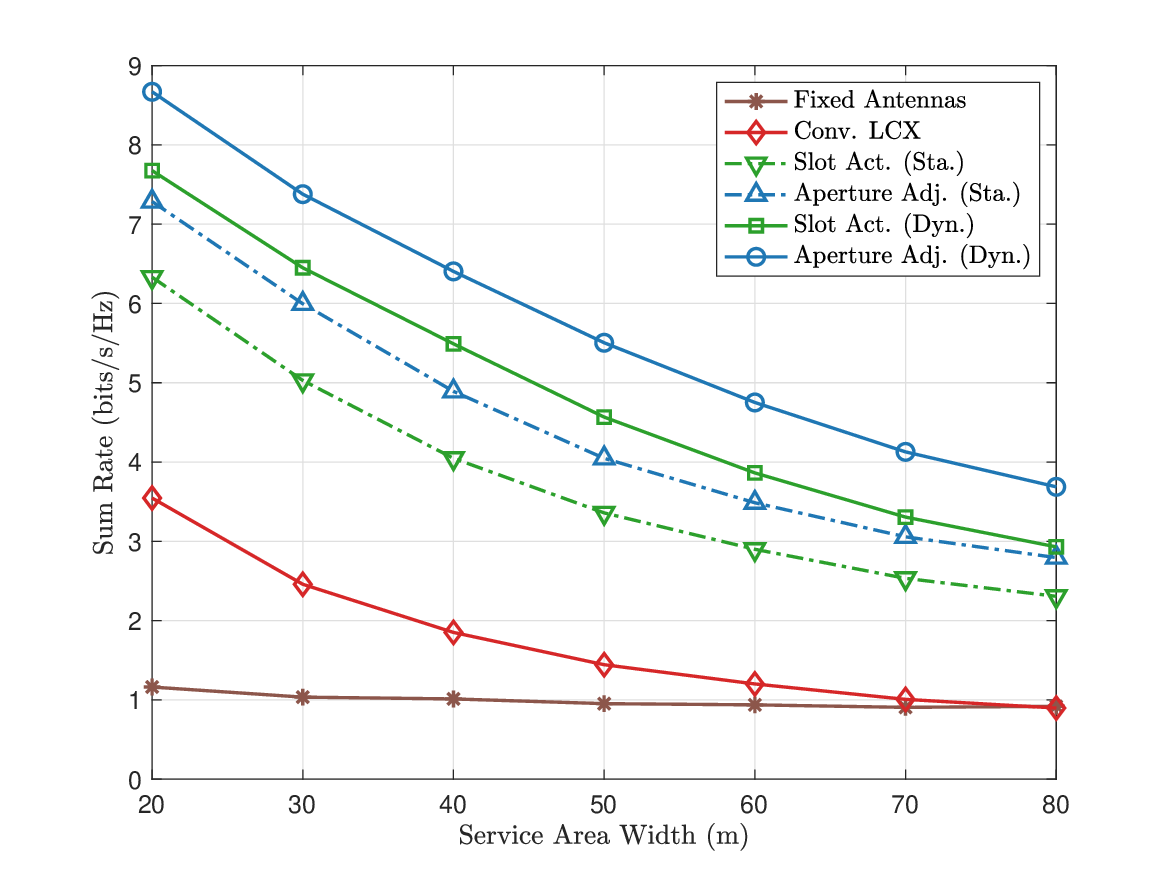}}}\vspace{-2mm}
\subfigure[Outage Probability]{\centering{\includegraphics[width=82mm]{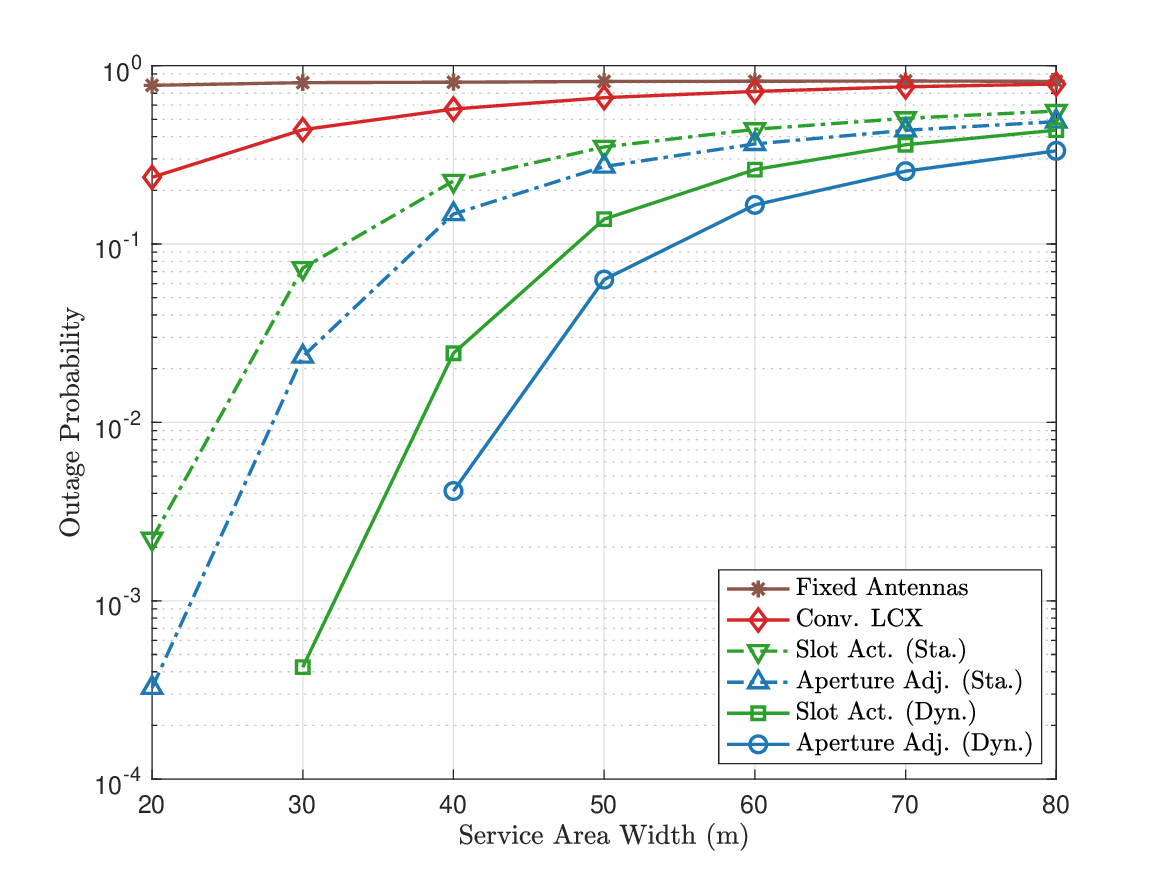}}}}\vspace{-2mm}
\caption{Impact of the width of the region on the sum rate and outage probability, where $K=50$, $N=4$, $P_t=20$~dBm, and $R_\mathrm{min}=0.5$~bits/s/Hz.}\vspace{-4mm}
\label{result3}
\end{figure}

\fref{result3} illustrates the impact of the service area width $D_y$ on the system performance. As $D_y$ increases, the sum rate decreases while the outage probability increases for all schemes. It can be observed that the performance of the fixed-antenna system is only mildly affected by this change, whereas the LCX based schemes experience a much more pronounced degradation. This is because, in the LCX pinching-antenna system, the radiated energy is mainly concentrated around the slots along the cable. Increasing $D_y$ generally implies that users are located farther away from the cable, resulting in weaker effective channel gains. Moreover, in the static case, the performance advantage of aperture adjustment over slot activation becomes less significant as $D_y$ increases, since the common aperture configuration must compromise among users with increasingly diverse channel conditions. In contrast, in the dynamic case, the performance gap between aperture adjustment and slot activation becomes larger as the service area expands, because dynamic aperture control enables user-specific radiation optimization and can better exploit the additional flexibility provided by continuous aperture adjustment.

\begin{figure}[!t]
\centering{
\subfigure[Sum Rate]{\centering{\includegraphics[width=82mm]{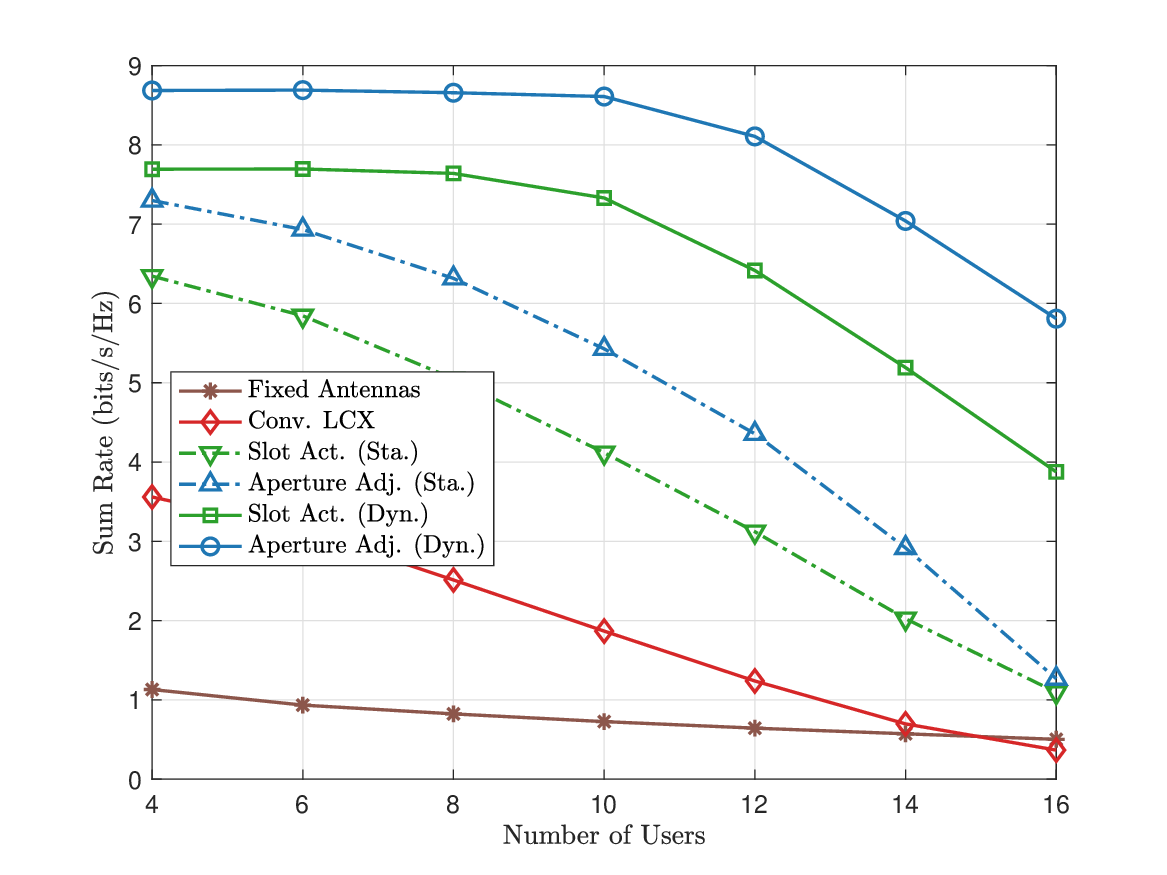}}}\vspace{-2mm}
\subfigure[Outage Probability]{\centering{\includegraphics[width=82mm]{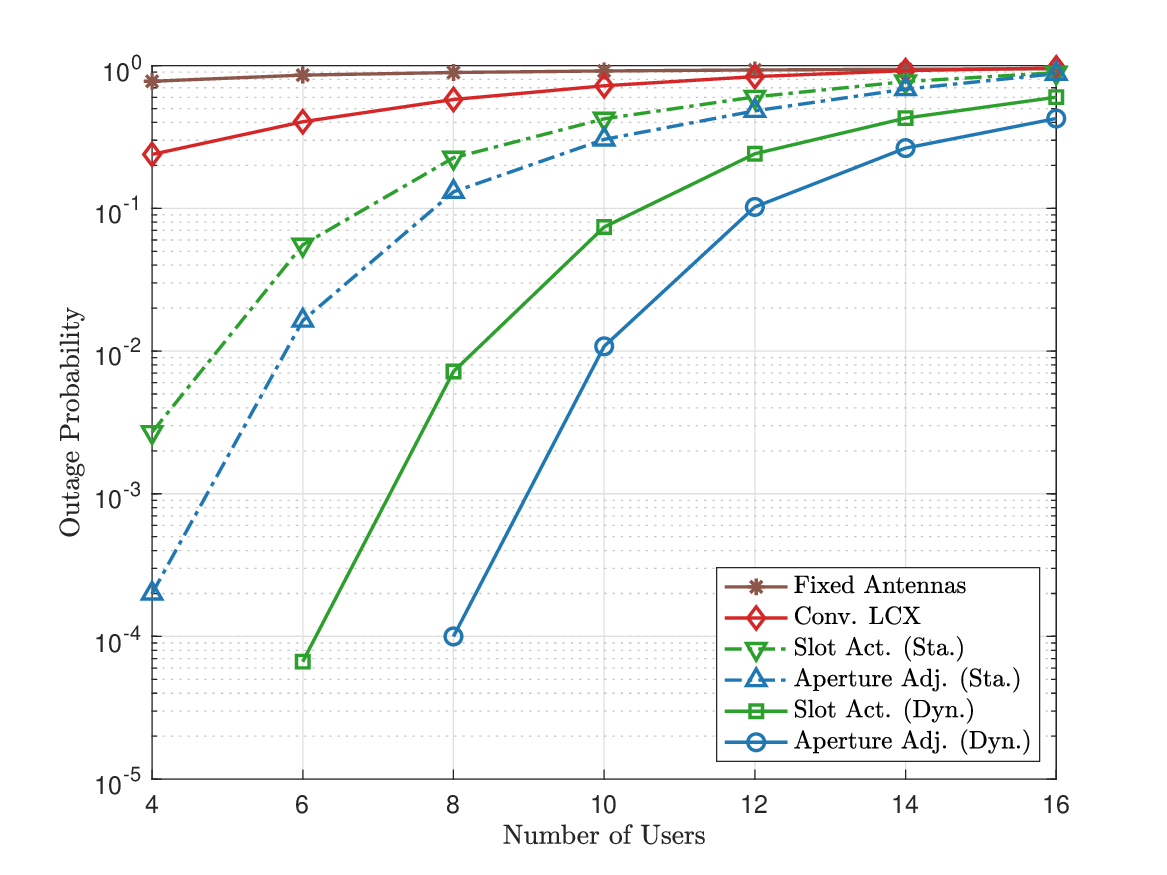}}}}\vspace{-2mm}
\caption{Impact of the number of users on the sum rate and outage probability, where $D_y=20$~m, $K=50$, $P_t=20$~dBm, and $R_\mathrm{min}=0.5$~bits/s/Hz.}\vspace{-4mm}
\label{result4}
\end{figure}

\fref{result4} shows the impact of the number of users on the sum rate and the outage probability. As the number of users increases, the sum rate decreases and the outage probability increases, since each user is allocated less time resource under the TDMA protocol and the QoS constraints become more difficult to satisfy simultaneously. It can be observed that the fixed-antenna baseline is mildly affected by the number of users. In contrast, the performance of the LCX based schemes degrades more noticeably, and the conventional LCX scheme even yields a lower sum rate than the fixed-antenna baseline when the number of users becomes large. For the static case, the performance of aperture adjustment gradually approaches that of slot activation as the user number increases, since the common aperture configuration leaves limited room for continuous aperture optimization. By contrast, in the dynamic case, the performance gap between aperture adjustment and slot activation becomes more pronounced with the user number, because user-specific aperture control makes the benefit of continuous slot-level amplitude adaptation more evident. 
\section{Conclusions}
This paper studied an LCX pinching-antenna system with adjustable slot apertures, where binary slot activation was extended to continuous aperture adjustment, inducing a continuous weighting effect of each slot on the corresponding channel amplitude. Analytical results characterized the performance gain of continuous aperture control and revealed the impact of channel coherence. Under the proposed framework, both static and dynamic TDMA schemes were developed, and the formulated sum rate maximization problems were efficiently solved. Simulation results demonstrated clear performance improvements over conventional slot activation, fixed-antenna, and traditional LCX systems, and validated the analytical insights. These findings highlight the potential of continuous aperture adjustment for enhancing distributed antenna systems in future wireless networks.
\section*{Appendix~A: Proof of Proposition~\ref{strict}}
This proposition can be proved by considering two cases for a fixed binary activation vector $\boldsymbol{\alpha}_\mathrm{bin}$ with active slot index set $\mathcal{S}$: i) decreasing the aperture adjustment coefficient of an active slot (partial aperture closing), and ii) increasing the aperture adjustment coefficient of an inactive slot (partial aperture opening). Since $\log_2(1+x)$ is strictly increasing, maximizing the achievable data rate of user $n$ is equivalent to maximizing the following normalized SNR metric:
\begin{equation}
\gamma_n(\boldsymbol{\alpha})=\frac{|\mathbf{h}_n^\mathsf{T}\boldsymbol{\alpha}|^2}{\|\boldsymbol{\alpha}\|_2^2}.
\end{equation}
For each case, a feasible continuous variation is constructed and shown to induce a strictly positive first-order increase in $\gamma_n(\boldsymbol{\alpha})$.
\subsubsection{Partial Closing of Active Slot Apertures}
With a given active slot $k\in\mathcal{S}$, let $\tau\in(0,1)$ denote a variation parameter. The resulting feasible aperture adjustment vector can be obtained as follows:
\begin{equation}
\boldsymbol{\alpha}(\tau)=\boldsymbol{\alpha}_\mathrm{bin}-\tau\mathbf{e}_k,
\end{equation}
where $\mathbf{e}_k$ is the $k$-th standard basis vector. This variation keeps all slot aperture adjustment coefficients within $[0,1]$ since $\alpha_k(\tau)=1-\tau\in(0,1)$ and all other components remain $0$ or $1$. By substituting $\boldsymbol{\alpha}(\tau)$ into $\gamma_n(\boldsymbol{\alpha})$, the resulting expression can be obtained. The numerator term is given by
\begin{equation}
\mathbf{h}_n^\mathsf{T}\boldsymbol{\alpha}(\tau)=\mathbf{h}_n^\mathsf{T}(\boldsymbol{\alpha}_\mathrm{bin}-\tau\mathbf{e}_k)=\mathbf{h}_n^\mathsf{T}\boldsymbol{\alpha}_\mathrm{bin}-\tau h_{k,n}.
\end{equation}
Moreover, the denominator term becomes
\begin{equation}
\|\boldsymbol{\alpha}(\tau)\|_2^2=\|\boldsymbol{\alpha}_\mathrm{bin}-\tau\mathbf{e}_k\|_2^2=\|\boldsymbol{\alpha}_\mathrm{bin}\|_2^2-2\tau\alpha_{\mathrm{bin},k}+\tau^2,
\end{equation}
where $\alpha_{\mathrm{bin},k}=1$ holds for $k\in\mathcal{S}$. As a result, defining $f(\tau)\triangleq \gamma_n(\boldsymbol{\alpha}(\tau))$ leads to the following equation:
\begin{equation}\label{f1}
f(\tau)=\frac{|\mathbf{h}_n^\mathsf{T}\boldsymbol{\alpha}_\mathrm{bin}-\tau h_{k,n}|^2}{\|\boldsymbol{\alpha}_\mathrm{bin}\|_2^2-2\tau+\tau^2}.
\end{equation}
By expanding the numerator, it follows that
\begin{align}\nonumber
&|\mathbf{h}_n^\mathsf{T}\boldsymbol{\alpha}_\mathrm{bin}-\tau h_{k,n}|^2\\
=&|\mathbf{h}_n^\mathsf{T}\boldsymbol{\alpha}_\mathrm{bin}|^2-2\tau\Re\!\left\{(\mathbf{h}_n^\mathsf{T}\boldsymbol{\alpha}_\mathrm{bin})h_{k,n}^*\right\}+\tau^2|h_{k,n}|^2.
\end{align}
Accordingly, \eqr{f1} can be rewritten as follows:
\begin{equation}
f(\tau)=\frac{|\mathbf{h}_n^\mathsf{T}\boldsymbol{\alpha}_\mathrm{bin}|^2\!\!-\!2\tau\Re\!\left\{\!(\mathbf{h}_n^\mathsf{T}\boldsymbol{\alpha}_\mathrm{bin})h_{k,n}^*\!\right\}\!+\!\tau^2|h_{k,n}|^2}{\|\boldsymbol{\alpha}_\mathrm{bin}\|_2^2-2\tau+\tau^2}.
\end{equation}
To characterize the first-order increase at $\tau=0$, define
\begin{equation}
u(\tau)\triangleq|\mathbf{h}_n^\mathsf{T}\boldsymbol{\alpha}_\mathrm{bin}|^2\!\!-\!2\tau\Re\!\left\{\!(\mathbf{h}_n^\mathsf{T}\boldsymbol{\alpha}_\mathrm{bin})h_{k,n}^*\!\right\}\!+\!\tau^2|h_{k,n}|^2,
\end{equation}
and
\begin{equation}
v(\tau)\triangleq\|\boldsymbol{\alpha}_\mathrm{bin}\|_2^2-2\tau+\tau^2,	
\end{equation}
such that $f(\tau)=u(\tau)/v(\tau)$. The values at $\tau=0$ are
\begin{equation}
\left\{\begin{aligned}
u(0)&=|\mathbf{h}_n^\mathsf{T}\boldsymbol{\alpha}_\mathrm{bin}|^2,\\
u'(0)&=-2\Re\!\left\{(\mathbf{h}_n^\mathsf{T}\boldsymbol{\alpha}_\mathrm{bin})h_{k,n}^*\right\},\\
v(0)&=\|\boldsymbol{\alpha}_\mathrm{bin}\|_2^2,\\
v'(0)&=-2.
\end{aligned}\right.
\end{equation}
By applying the quotient rule, it gives
\begin{align}\nonumber
f'(0)&=\frac{u'(0)v(0)-u(0)v'(0)}{v(0)^2}\\
&=\frac{2\!\left(|\mathbf{h}_n^\mathsf{T}\boldsymbol{\alpha}_\mathrm{bin}|^2\!\!-\!\|\boldsymbol{\alpha}_\mathrm{bin}\|_2^2\Re\!\left\{(\mathbf{h}_n^\mathsf{T}\boldsymbol{\alpha}_\mathrm{bin})h_{k,n}^*\right\}\right)}{\|\boldsymbol{\alpha}_\mathrm{bin}\|_2^4}.
\end{align}
Since $\|\boldsymbol{\alpha}_\mathrm{bin}\|_2^2=|\mathcal{S}|>0$, $f'(0)>0$ holds if
\begin{equation}\label{fcond1}
\Re\!\left\{(\mathbf{h}_n^\mathsf{T}\boldsymbol{\alpha}_\mathrm{bin})h_{k,n}^*\right\}<\frac{|\mathbf{h}_n^\mathsf{T}\boldsymbol{\alpha}_\mathrm{bin}|^2}{\|\boldsymbol{\alpha}_\mathrm{bin}\|_2^2}.
\end{equation}
Based on the continuity of $f(\tau)$, if inequality \eqr{fcond1} holds, there exists a sufficiently small $\tau>0$ such that $f(\tau)>f(0)$, i.e.,
\begin{equation}
\gamma_n(\boldsymbol{\alpha}(\tau))>\gamma_n(\boldsymbol{\alpha}_\mathrm{bin}).
\end{equation}
Since $R_n(\boldsymbol{\alpha})=\log_2(1+\frac{P_t}{\sigma^2}\gamma_n(\boldsymbol{\alpha}))$ is strictly increasing in $\gamma_n(\boldsymbol{\alpha})$, it follows that
\begin{equation}
R_n(\boldsymbol{\alpha}(\tau))>R_n(\boldsymbol{\alpha}_\mathrm{bin}).
\end{equation}
This establishes strict improvement under condition (\ref{strictcond}a).
\subsubsection{Partial Opening of Inactive Slot Apertures}
Consider an inactive slot $k\notin\mathcal{S}$. For a small variation parameter $\tau\in(0,1)$, the $k$-th aperture adjustment coefficient is increased from zero by defining
\begin{equation}
\boldsymbol{\alpha}(\tau)=\boldsymbol{\alpha}_\mathrm{bin}+\tau\mathbf{e}_k,
\end{equation}
which corresponds to partially opening the aperture of slot $k$. Since $\alpha_{\mathrm{bin},k}=0$ for $k\notin\mathcal{S}$, one has $\alpha_k(\tau)=\tau\in(0,1)$, while all other components remain unchanged in $\{0,1\}$, and hence $\boldsymbol{\alpha}(\tau)\in[0,1]^K$ is feasible. Substituting $\boldsymbol{\alpha}(\tau)$ into $\gamma_n(\boldsymbol{\alpha})$ gives
\begin{equation}
\mathbf{h}_n^\mathsf{T}\boldsymbol{\alpha}(\tau)=\mathbf{h}_n^\mathsf{T}(\boldsymbol{\alpha}_\mathrm{bin}+\tau\mathbf{e}_k)=\mathbf{h}_n^\mathsf{T}\boldsymbol{\alpha}_\mathrm{bin}+\tau h_{k,n},
\end{equation}
and
\begin{equation}
\|\boldsymbol{\alpha}(\tau)\|_2^2=\|\boldsymbol{\alpha}_\mathrm{bin}\|_2^2+2\tau\alpha_{\mathrm{bin},k}+\tau^2=\|\boldsymbol{\alpha}_\mathrm{bin}\|_2^2+\tau^2,
\end{equation}
where the latter equality follows from $\alpha_{\mathrm{bin},k}=0$. As a result, defining $f(\tau)\triangleq \gamma_n(\boldsymbol{\alpha}(\tau))$ gives
\begin{equation}\label{f2}
f(\tau)=\frac{|\mathbf{h}_n^\mathsf{T}\boldsymbol{\alpha}_\mathrm{bin}+\tau h_{k,n}|^2}{\|\boldsymbol{\alpha}_\mathrm{bin}\|_2^2+\tau^2}.
\end{equation}
The numerator can be rewritten as follows:
\begin{align}\nonumber
&|\mathbf{h}_n^\mathsf{T}\boldsymbol{\alpha}_\mathrm{bin}+\tau h_{k,n}|^2\\
=&|\mathbf{h}_n^\mathsf{T}\boldsymbol{\alpha}_\mathrm{bin}|^2\!+2\tau\Re\!\left\{(\mathbf{h}_n^\mathsf{T}\boldsymbol{\alpha}_\mathrm{bin})h_{k,n}^*\right\}+\tau^2|h_{k,n}|^2.
\end{align}
Hence, \eqr{f2} becomes
\begin{equation}
f(\tau)=\frac{|\mathbf{h}_n^\mathsf{T}\boldsymbol{\alpha}_\mathrm{bin}|^2\!\!+\!2\tau\Re\!\left\{\!(\mathbf{h}_n^\mathsf{T}\boldsymbol{\alpha}_\mathrm{bin})h_{k,n}^*\!\right\}\!+\!\tau^2|h_{k,n}|^2}{\|\boldsymbol{\alpha}_\mathrm{bin}\|_2^2+\tau^2}.
\end{equation}
To examine the first-order behavior around  $\tau=0$, define
\begin{equation}
u(\tau)\triangleq |\mathbf{h}_n^\mathsf{T}\boldsymbol{\alpha}_\mathrm{bin}|^2\!\!+\!2\tau\Re\!\left\{(\mathbf{h}_n^\mathsf{T}\boldsymbol{\alpha}_\mathrm{bin})h_{k,n}^*\right\}\!+\!\tau^2|h_{k,n}|^2,
\end{equation}
and
\begin{equation}
v(\tau)\triangleq\|\boldsymbol{\alpha}_\mathrm{bin}\|_2^2+\tau^2.
\end{equation}
It follows that $f(\tau)=u(\tau)/v(\tau)$. At $\tau=0$, the following equations hold
\begin{equation}
\left\{\begin{aligned}
u(0)&=|\mathbf{h}_n^\mathsf{T}\boldsymbol{\alpha}_\mathrm{bin}|^2,\\
u'(0)&=2\Re\!\left\{(\mathbf{h}_n^\mathsf{T}\boldsymbol{\alpha}_\mathrm{bin})h_{k,n}^*\right\},\\
v(0)&=\|\boldsymbol{\alpha}_\mathrm{bin}\|_2^2,\\
v'(0)&=0.
\end{aligned}\right.
\end{equation}
Applying the quotient rule gives
\begin{equation}
f'(0)=\frac{u'(0)v(0)-u(0)v'(0)}{v(0)^2}=\frac{2\Re\!\left\{(\mathbf{h}_n^\mathsf{T}\boldsymbol{\alpha}_\mathrm{bin})h_{k,n}^*\right\}}{\|\boldsymbol{\alpha}_\mathrm{bin}\|_2^2}.
\end{equation}
Therefore, $f'(0)>0$ holds if
\begin{equation}
\Re\!\left\{(\mathbf{h}_n^\mathsf{T}\boldsymbol{\alpha}_\mathrm{bin})h_{k,n}^*\right\}>0.
\end{equation}
By the continuity of $f(\tau)$, for sufficiently small $\tau>0$, it holds that $f(\tau)>f(0)$, i.e.,
\begin{equation}
\gamma_n(\boldsymbol{\alpha}(\tau))>\gamma_n(\boldsymbol{\alpha}_\mathrm{bin}),
\end{equation}
which indicates the strict improvement of the data rate under condition (\ref{strictcond}b), as follows:
\begin{equation}
R_n(\boldsymbol{\alpha}(\tau))>R_n(\boldsymbol{\alpha}_\mathrm{bin}).
\end{equation}

By combining these two conditions, if either (\ref{strictcond}a) or (\ref{strictcond}b) holds,  there exists a feasible continuous aperture adjustment vector that strictly improves the data rate over the binary activation vector. This proof is completed.\QEDA
\section*{Appendix~B: Proof of Proposition~\ref{upperbound}}
In the conventional LCX system, all slots radiate continuously. This scenario corresponds to a particular binary slot activation pattern in which all slots are active, i.e., $\alpha_{\mathrm{bin},k}=1$ for all $k\in\mathcal{K}$. In this case, let $\boldsymbol{\alpha}_{\mathrm{bin}}=\mathbf{1}\in\{0,1\}^K$ denote the activation vector. It follows that $\|\boldsymbol{\alpha}_{\mathrm{bin}}\|_2^2=K$ and $\mathbf{h}_n^\mathsf{T}\boldsymbol{\alpha}_{\mathrm{bin}}=\sum_{k\in\mathcal{K}}h_{k,n}$. Hence, the normalized SNR metric under the conventional LCX baseline is given by
\begin{equation}
\gamma_n(\boldsymbol{\alpha}_\mathrm{bin})=\frac{1}{K}\left|\sum_{k\in\mathcal{K}}h_{k,n}\right|^2.
\end{equation}
By performing aperture adjustment, the coefficient vector $\boldsymbol{\alpha}$ is designed over all slots, subject to $\alpha_k\in[0,1]$ for all $k\in\mathcal{K}$ and $\boldsymbol{\alpha}\neq\mathbf{0}$. For any feasible vector $\boldsymbol{\alpha}$ satisfying these constraints, the normalized SNR is given by
\begin{equation}
\gamma_n(\boldsymbol{\alpha})=\frac{\left|\sum_{k\in\mathcal{K}}\alpha_k h_{k,n}\right|^2}{\sum_{k\in\mathcal{K}}\alpha_k^2}=\frac{|\boldsymbol{\alpha}^\mathsf{T}\mathbf{h}_n|^2}{\|\boldsymbol{\alpha}\|_2^2}.
\end{equation}
Based on the Cauchy-Schwarz inequality, the following inequality can be obtained:
\begin{align}\nonumber
&|\boldsymbol{\alpha}^\mathsf{T}\mathbf{h}_n|^2\le\|\boldsymbol{\alpha}\|_2^2\|\mathbf{h}_n\|_2^2\\
\Rightarrow &\left|\sum_{k\in\mathcal{K}}\alpha_k h_{k,n}\right|^2\le\left(\sum_{k\in\mathcal{K}}\alpha_k^2\right)\left(\sum_{k\in\mathcal{K}}|h_{k,n}|^2\right).
\end{align}
Since $\boldsymbol{\alpha}\neq\mathbf{0}$, it follows that $\sum_{k\in\mathcal{K}}\alpha_k^2>0$. Dividing both sides by $\sum_{k\in\mathcal{K}}\alpha_k^2$ gives the following inequality:
\begin{equation}
\gamma_n(\boldsymbol{\alpha})=\frac{\left|\sum_{k\in\mathcal{K}}\alpha_k h_{k,n}\right|^2}{\sum_{k\in\mathcal{K}}\alpha_k^2}\le\sum_{k\in\mathcal{K}}|h_{k,n}|^2,
\end{equation}
and therefore
\begin{equation}\label{gamma1}
\gamma_n(\boldsymbol{\alpha})\le\sum_{k\in\mathcal{K}}|h_{k,n}|^2.
\end{equation}
The corresponding rate gain over the conventional LCX baseline is given by
\begin{equation}\label{deltar}
\Delta R_n\triangleq\log_2\!\left(1+\frac{P_t}{\sigma^2}\gamma_n(\boldsymbol{\alpha})\right)-\log_2\!\left(1+\frac{P_t}{\sigma^2}\gamma_n(\boldsymbol{\alpha}_\mathrm{bin})\right).
\end{equation}
Substituting \eqr{gamma1} into \eqr{deltar} gives
\begin{equation}\label{deltar1}
\Delta R_n\le\log_2\!\left(\frac{1+\frac{P_t}{\sigma^2}\sum_{k\in\mathcal{K}}|h_{k,n}|^2}{1+\frac{P_t}{\sigma^2}\frac{\left|\sum_{k\in\mathcal{K}}h_{k,n}\right|^2}{K}}\right).
\end{equation}

Under high SNR, $\log_2(1+x)\approx\log_2(x)$, and \eqr{deltar1} can be approximated as follows:
\begin{equation}\label{deltar2}
\Delta R_n\!\lesssim\!\log_2\!\!\left(\!\frac{K\!\sum_{k\in\mathcal{K}}\!|h_{k,n}|^2}{\left|\sum_{k\in\mathcal{K}}h_{k,n}\right|^2}\!\right)\!=\!-\!\log_2\!\!\left(\!\frac{\left|\sum_{k\in\mathcal{K}}h_{k,n}\right|^2}{K\!\sum_{k\in\mathcal{K}}\!|h_{k,n}|^2}\!\right).
\end{equation}
By applying the Cauchy-Schwarz inequality to $\mathbf{1}\in\mathbb{C}^{|\mathcal{K}|}$ and $\mathbf{h}_n$,  the following inequality can be obtained:
\begin{equation}\label{cs2}
|\mathbf{1}^\mathsf{H}\mathbf{h}_n|^2\le\|\mathbf{1}\|_2^2\|\mathbf{h}_n\|_2^2\Rightarrow\!\left|\sum_{k\in\mathcal{K}}\!h_{k,n}\right|^2\!\!\le K\!\sum_{k\in\mathcal{K}}\!|h_{k,n}|^2,
\end{equation}
since $\|\mathbf{1}\|^2=K$ and $\|\mathbf{h}_n\|^2=\sum_{k\in\mathcal{K}}|h_{k,n}|^2$. Provided that $\sum_{k\in\mathcal{K}}|h_{k,n}|^2>0$, inequality \eqr{cs2} indicates that the fraction inside the logarithm in \eqr{deltar2} is at most one, and thus the derived upper bound is non-negative, i.e., 
\begin{equation}
0\le\frac{\left|\sum_{k\in\mathcal{K}}h_{k,n}\right|^2}{K\!\sum_{k\in\mathcal{K}}|h_{k,n}|^2}\le 1.
\end{equation}

Since the binary activation vector $\boldsymbol{\alpha}_{\mathrm{bin}}=\mathbf{1}$ is feasible under aperture adjustment, i.e., $\boldsymbol{\alpha}_{\mathrm{bin}}\in[0,1]^K$, there exists a feasible aperture adjustment vector $\boldsymbol{\alpha}$ such that
\begin{equation}
\gamma_n(\boldsymbol{\alpha})\ge\gamma_n(\boldsymbol{\alpha}_\mathrm{bin}),
\end{equation}
and therefore,
\begin{equation}
\Delta R_n \ge 0.
\end{equation}
Under high SNR, the rate gain satisfies
\begin{equation}
0\le\Delta R_n\lesssim-\log_2\left(\frac{\left|\sum_{k\in\mathcal{K}}h_{k,n}\right|^2}{K\!\sum_{k\in\mathcal{K}}|h_{k,n}|^2}\right).
\end{equation}
This proposition is proven.\QEDA
\bibliographystyle{IEEEtran}
\bibliography{KaidisBib}

@STRING{IEEE_J_WCOML       	= "{IEEE} Wireless Commun. Lett."}

@STRING{IEEE_J_JSAC       	= "{IEEE} J. Sel. Areas Commun."}

@STRING{IEEE_J_COM        	= "{IEEE} Trans. Commun."}

@STRING{IEEE_J_SP         	= "{IEEE} Trans. Signal Process."}

@STRING{IEEE_J_WCOM       	= "{IEEE} Trans. Wireless Commun."}

@STRING{IEEE_J_VT         	= "{IEEE} Trans. Veh. Technol."}

@STRING{IEEE_M_COM        	= "{IEEE} Commun. Mag."}

@STRING{IEEE_WM_COM        	= "{IEEE} Wireless Commun."}

@STRING{IEEE_OJ_COMS       	= "{IEEE} Open J. Commun. Soc."}

@STRING{IEEE_J_IOT 			= "{IEEE} Internet Things J."}

@article{kaidi2026generalized,
  title={Leaky Coaxial Cable based Generalized Pinching-Antenna Systems with Dual-Port Feeding},
  author={Wang, Kaidi and Ding, Zhiguo and So, Daniel KC},
  journal={arXiv preprint arXiv:2602.21856},
  year={2026}
}

@ARTICLE{chen2025pin,
  author={Chen, Jung-Chieh and Wu, Po-Ching and Wong, Kai-Kit},
  journal=IEEE_OJ_COMS, 
  title={Dynamic Placement of Pinching Antennas for Multicast {MU}-{MISO} Downlinks}, 
  year={2025},
  volume={6},
  number={},
  pages={5611-5625},
  doi={10.1109/OJCOMS.2025.3582895}}

@ARTICLE{tyrovolas2026pin,
  author={Tyrovolas, Dimitrios and Tegos, Sotiris A. and Xiao, Yue and Diamantoulakis, Panagiotis D. and Ioannidis, Sotiris and Liaskos, Christos K. and Karagiannidis, George K. and Asimonis, Stylianos D.},
  journal=IEEE_J_IOT, 
  title={How many Pinching Antennas are Enough?}, 
  year={2026},
  volume={},
  number={},
  pages={1-1},
  doi={10.1109/JIOT.2026.3668307}}

@ARTICLE{xu2026pass,
  author={Xu, Xiaoxia and Mu, Xidong and Wang, Zhaolin and Liu, Yuanwei and Nallanathan, Arumugam},
  journal={IEEE Transactions on Communications}, 
  title={Pinching-Antenna Systems ({PASS}): Power Radiation Model and Optimal Beamforming Design}, 
  year={2026},
  volume={74},
  number={},
  pages={2160-2175},
  doi={10.1109/TCOMM.2025.3636083}}

@ARTICLE{zhou2026pin,
  author={Zhou, Enzhi and Cui, Jingjing and Liu, Ziyue and Ding, Zhiguo and Fan, Pingzhi},
  journal=IEEE_J_WCOM, 
  title={Joint Transmission for Cellular Networks With Pinching Antennas: System Design and Analysis}, 
  year={2026},
  volume={25},
  number={},
  pages={10175-10190},
  doi={10.1109/TWC.2026.3652240}}

@article{xie2026pinching,
  title={Pinching Antennas in Blockage-Aware Environments: Modeling, Design, and Optimization},
  author={Xie, Ximing and Fang, Fang and Ding, Zhiguo and Wang, Xianbin},
  journal={arXiv preprint arXiv:2601.01277},
  year={2026}
}

@article{ding2025analytical,
  title={Analytical optimization for antenna placement in pinching-antenna systems},
  author={Ding, Zhiguo and Poor, H Vincent},
  journal={arXiv preprint arXiv:2507.13307},
  year={2025}
}

@ARTICLE{shen2018fp,
  author={Shen, Kaiming and Yu, Wei},
  journal=IEEE_J_SP, 
  title={Fractional Programming for Communication Systems—Part I: Power Control and Beamforming}, 
  year={2018},
  volume={66},
  number={10},
  pages={2616-2630},
  doi={10.1109/TSP.2018.2812733}}

@ARTICLE{wang2001lcx1,
  author={Jun Hong Wang},
  journal={IEEE Microwave and Wireless Components Letters}, 
  title={Leaky coaxial cable with adjustable coupling loss for mobile communications in complex environments}, 
  year={2001},
  month={Aug.},
  volume={11},
  number={8},
  pages={346-348},
  doi={10.1109/7260.941785}}

@ARTICLE{siddiqui2020lcx,
  author={Siddiqui, Zeeshan and Sonkki, Marko and Tuhkala, Marko and Myllymäki, Sami},
  journal={IEEE Trans. Antennas Propag.}, 
  title={Periodically Slotted Coupled Mode Leaky Coaxial Cable With Enhanced Radiation Performance}, 
  year={2020},
  month={Nov.},
  volume={68},
  number={11},
  pages={7595-7600},
  doi={10.1109/TAP.2020.2990478}}

@INPROCEEDINGS{nagayama2022lcx1,
  author={Nagayama, Kenta and Zhu, Junjie and Hou, Pengcheng and Hou, Yafei and Denno, Satoshi},
  booktitle={2022 IEEE 4th Global Conference on Life Sciences and Technologies (LifeTech)}, 
  title={A Proposal of Spatial Modulation Using On/Off the Slots of Leaky Coaxial Cable}, 
  year={2022},
  volume={},
  number={},
  pages={289-290},
  doi={10.1109/LifeTech53646.2022.9754939}}

@misc{asplund2016leaky,
  title={Leaky coaxial cable having radiation slots that can be activated or deactivated},
  author={Asplund, Henrik and Berg, Jan-Erik and Medbo, Jonas},
  year={2016},
  month=aug # "~30",
  publisher={Google Patents},
  note={{US} Patent 9431716}}

@article{kaidi2025generalized,
  title={Generalized Pinching-Antenna Systems: A Leaky-Coaxial-Cable Perspective},
  author={Wang, Kaidi and Ding, Zhiguo and Hanzo, Lajos},
  journal={arXiv preprint arXiv:2512.04979},
  year={2025}
}

@article{ding2025edma,
  title={Environment division multiple access ({EDMA}): A feasibility study via pinching antennas},
  author={Ding, Zhiguo and Schober, Robert and Poor, H Vincent},
  journal={arXiv preprint arXiv:2511.03820},
  year={2025}
}

@ARTICLE{wang2025pa,
  author={Wang, Zhaolin and Ouyang, Chongjun and Mu, Xidong and Liu, Yuanwei and Ding, Zhiguo},
  journal=IEEE_J_COM, 
  title={Modeling and Beamforming Optimization for Pinching-Antenna Systems}, 
  year={2025},
  volume={73},
  number={12},
  pages={13904-13919},
  month={Oct.},
  doi={10.1109/TCOMM.2025.3621049}}

@INPROCEEDINGS{torrance1996lcx,
  author={Torrance, J.M. and Keller, T. and Hanzo, L.},
  booktitle={Proceedings of Vehicular Technology Conference - VTC}, 
  title={Multi-level modulation in the indoors leaky feeder environment}, 
  year={1996},
  volume={3},
  number={},
  pages={1554-1558 vol.3},
  doi={10.1109/VETEC.1996.504019}}

@article{xu2025generalized,
  title={Generalized Pinching-Antenna Systems: A Tutorial on Principles, Design Strategies, and Future Directions},
  author={Xu, Yanqing and Cui, Jingjing and Zhu, Yongxu and Ding, Zhiguo and Chang, Tsung-Hui and Schober, Robert and Wong, Vincent WS and Dobre, Octavia A and Karagiannidis, George K and Poor, H Vincent and others},
  journal={arXiv preprint arXiv:2510.14166},
  year={2025}}

@ARTICLE{wang2001lcx,
  author={Jun Hong Wang and Mei, K.K.},
  journal={IEEE Trans. Antennas Propag.}, 
  title={Theory and analysis of leaky coaxial cables with periodic slots}, 
  year={2001},
  month={Dec.},
  volume={49},
  number={12},
  pages={1723-1732},
  doi={10.1109/8.982452}}

@ARTICLE{yin2024lcx,
  author={Yin, Lu and Song, Tianzhu and Ni, Qiang and Xiao, Quanbin and Sun, Yuan and Guo, Wenfang},
  journal=IEEE_J_JSAC, 
  title={New Signal and Algorithms for {5G/6G} High Precision Train Positioning in Tunnel With Leaky Coaxial Cable}, 
  year={2024},
  volume={42},
  number={1},
  pages={223-238},
  doi={10.1109/JSAC.2023.3322790}}

@ARTICLE{morgan1999lcx,
  author={Morgan, S.P.},
  journal=IEEE_J_VT, 
  title={Prediction of indoor wireless coverage by leaky coaxial cable using ray tracing}, 
  year={1999},
  volume={48},
  number={6},
  pages={2005-2014},
  doi={10.1109/25.806793}}

@ARTICLE{tegos2025pin,
  author={Tegos, Sotiris A. and Diamantoulakis, Panagiotis D. and Ding, Zhiguo and Karagiannidis, George K.},
  journal=IEEE_J_WCOML, 
  title={Minimum Data Rate Maximization for Uplink Pinching-Antenna Systems}, 
  year={2025},
  month={Mar.},
  volume={14},
  number={5},
  pages={1516-1520},
  doi={10.1109/LWC.2025.3547956}}

@ARTICLE{xu2025pin2,
  author={Xu, Yanqing and Ding, Zhiguo and Cai, Donghong and Wong, Vincent W.S.},
  journal=IEEE_J_COM, 
  title={{QoS}-Aware {NOMA} Design for Downlink Pinching-Antenna Systems}, 
  year={2025},
  month={Sept.},
  volume={},
  number={},
  pages={1-1},
  doi={10.1109/TCOMM.2025.3605466}}

@ARTICLE{kaidi2025pin2,
  author={Wang, Kaidi and Ding, Zhiguo and Karagiannidis, George K.},
  journal=IEEE_J_WCOM, 
  title={Antenna Activation and Resource Allocation in Multi-Waveguide Pinching-Antenna Systems}, 
  year={2025},
  month={Sept.},
  volume={},
  number={},
  pages={1-1},
  doi={10.1109/TWC.2025.3608068}}

@ARTICLE{kaidi2025pin4,
  author={Wang, Kaidi and Ouyang, Chongjun and Liu, Yuanwei and Ding, Zhiguo},
  journal=IEEE_J_WCOML, 
  title={Pinching-Antenna Systems With {LoS} Blockages}, 
  year={2025},
  month={Sept.},
  volume={},
  number={},
  pages={1-1},
  doi={10.1109/LWC.2025.3614451}}

@article{liu2025pinching,
  author={Liu, Yuanwei and Wang, Zhaolin and Mu, Xidong and Ouyang, Chongjun and Xu, Xiaoxia and Ding, Zhiguo},
  journal=IEEE_M_COM, 
  title={Pinching-Antenna Systems: Architecture Designs, Opportunities, and Outlook}, 
  year={2025},
  month={Sept.},
  volume={},
  number={},
  pages={1-7},
  doi={10.1109/MCOM.001.2500037}}

@article{yang2025pinching,
  title={Pinching antennas: Principles, applications and challenges},
  author={Yang, Zheng and Wang, Ning and Sun, Yanshi and Ding, Zhiguo and Schober, Robert and Karagiannidis, George K and Wong, Vincent WS and Dobre, Octavia A},
  journal=IEEE_WM_COM,
  year={2025},
  month={Oct.},
  volume={},
  number={},
  pages={1-10},
}

@ARTICLE{xu2025pin,
  author={Xu, Yanqing and Ding, Zhiguo and Karagiannidis, George K.},
  journal=IEEE_J_WCOML, 
  title={Rate Maximization for Downlink Pinching-Antenna Systems}, 
  year={2025},
  volume={},
  number={},
  pages={1-1},
  doi={10.1109/LWC.2025.3543889}}

@ARTICLE{kaidi2025pin,
  author={Wang, Kaidi and Ding, Zhiguo and Schober, Robert},
  journal=IEEE_J_WCOML, 
  title={Antenna Activation for {NOMA} Assisted Pinching-Antenna Systems}, 
  year={2025},
  month={Mar.},
  volume={14},
  number={5},
  pages={1526-1530},
  doi={10.1109/LWC.2025.3548280}}

@article{wu2024fluid,
  author={Wu, Tuo and Zhi, Kangda and Yao, Junteng and Lai, Xiazhi and Zheng, Jianchao and Niu, Hong and Elkashlan, Maged and Wong, Kai-Kit and Chae, Chan-Byoung and Ding, Zhiguo and Karagiannidis, George K. and Debbah, Mérouane and Yuen, Chau},
  journal=IEEE_WM_COM, 
  title={Fluid Antenna Systems Enabling {6G}: Principles, Applications, and Research Directions}, 
  year={2025},
  volume={},
  number={},
  pages={1-9},
 
  doi={10.1109/MWC.2025.3629597}}

@article{suzuki2022pinching,
  title={Pinching Antenna-Using a Dielectric Waveguide as an Antenna},
  author={Fukuda, Atsushi and Yamamoto, Hiroto and  Okazaki, Hiroshi and Suzuki, Yasunori and Kawai, Kunihiro},
  journal={NTT DOCOMO Technical J.},
  volume={23},
  number={3},
  pages={5--12},
  year={2022},
  month = {Jan.}}

@ARTICLE{ma2024movable,
  author={Ma, Wenyan and Zhu, Lipeng and Zhang, Rui},
  journal=IEEE_J_WCOM, 
  title={{MIMO} Capacity Characterization for Movable Antenna Systems}, 
  year={2024},
  volume={23},
  number={4},
  pages={3392-3407},
  month={Apr.},
  doi={10.1109/TWC.2023.3307696}}

@ARTICLE{wu2019irs,
  author={Wu, Qingqing and Zhang, Rui},
  journal=IEEE_M_COM, 
  title={Towards Smart and Reconfigurable Environment: Intelligent Reflecting Surface Aided Wireless Network}, 
  year={2020},
  volume={58},
  number={1},
  pages={106-112},
  month={Jan.},
  doi={10.1109/MCOM.001.1900107}}

@ARTICLE{zhu2023modeling,
  author={Zhu, Lipeng and Ma, Wenyan and Zhang, Rui},
  journal=IEEE_J_WCOM, 
  title={Modeling and Performance Analysis for Movable Antenna Enabled Wireless Communications}, 
  year={2024},
  month={June},
  volume={23},
  number={6},
  pages={6234-6250},
  doi={10.1109/TWC.2023.3330887}}

@article{wong2020fluid,
  title={Fluid antenna systems},
  author={Wong, Kai-Kit and Shojaeifard, Arman and Tong, Kin-Fai and Zhang, Yangyang},
  journal=IEEE_J_WCOM,
  volume={20},
  number={3},
  pages={1950--1962},
  year={2020},
  month={Mar.},
  publisher={IEEE}
}

@article{wu2021intelligent,
  title={Intelligent reflecting surface-aided wireless communications: A tutorial},
  author={Wu, Qingqing and Zhang, Shuowen and Zheng, Beixiong and You, Changsheng and Zhang, Rui},
  journal=IEEE_J_COM,
  volume={69},
  number={5},
  pages={3313--3351},
  year={2021},
  month={May},
  publisher={IEEE}
}

@ARTICLE{ding2024pin,
  author={Ding, Zhiguo and Schober, Robert and Vincent Poor, H.},
  journal=IEEE_J_COM, 
  title={Flexible-Antenna Systems: A Pinching-Antenna Perspective}, 
  year={2025},
  volume={},
  number={},
  pages={1-1}}
\end{document}